\documentclass{article}
\usepackage[left=2.5cm,right=2.5cm,top=2.5cm,bottom=2.5cm]{geometry}
\usepackage[english]{babel}
\usepackage[T1]{fontenc}
\usepackage{graphicx}
\usepackage{amsmath, amsthm, amssymb, mathrsfs}
\usepackage{epstopdf}
\usepackage{caption}
\usepackage{subcaption}
\usepackage{float}
\usepackage{appendix}
\usepackage{tikz}
\usetikzlibrary{shapes,positioning}
\usepackage{braket}
\usepackage{anyfontsize}
\usepackage{mathptmx}
\usepackage[11pt]{moresize}
\usepackage{enumerate}
\usepackage{dcolumn}
\usepackage{bm}
\usepackage{tabularx}
\usepackage{array}
\usepackage{lipsum}
\usepackage{cancel}
\usepackage{enumitem}
\usepackage{soul}
%\usepackage{qcircuit}
%\usetikzlibrary{quantikz}
\usepackage{authblk}
\usepackage{pgfplots}
\usepackage{cite}
\usepackage[percent]{overpic}

\def\cH{{\mathcal H}}
\def\cN{{\mathcal N}}

\newcommand{\be}{\begin{equation}}
\newcommand{\ee}{\end{equation}}

\newcommand{\ba}{\begin{align}}
\newcommand{\ea}{\end{align}}

\newcommand{\bea}{\begin{eqnarray}}
\newcommand{\eea}{\end{eqnarray}}
\newcommand{\bt}{\begin{theorem}}
\newcommand{\et}{\end{theorem}}
\newcommand{\bp}{\begin{proposition}}
\newcommand{\ep}{\end{proposition}}
\newcommand{\bd}{\begin{definition}}
\newcommand{\ed}{\end{definition}}

\newtheorem{theorem}{Theorem}
\newtheorem{proposition}{Proposition}
\newtheorem{definition}{Definition}

\newtheorem{remark}{Remark}

\newtheorem{lemma}[theorem]{Lemma}
\newtheorem{corollary}[theorem]{Corollary}
\begin{document}

\title{Discrimination of bosonic dephasing quantum channels}

\author[1]{Samad Khabbazi Oskouei%
	\thanks{\texttt{kh.oskouei@iauvaramin.ac.ir}}}
\author[2]{Laleh Memarzadeh%
	\thanks{\texttt{memarzadeh@sharif.edu}}}
\author[3,4]{Stefano Mancini%
	\thanks{\texttt{stefano.mancini@unicam.it}}}
\author[3,4]{Milajiguli Rexiti%
	\thanks{\texttt{milajiguli.milajiguli@unicam.it}; Corresponding author}}

\affil[1]{Department of Mathematics, Varamin-Pishva Branch Islamic Azad University, 33817-7489 Varamin, Iran}
\affil[2]{Department of Physics, Sharif University of Technology,  11155-9161 Tehran, Iran}
\affil[3]{School of Science and Technology, University of Camerino, Via Madonna delle Carceri 9, I-62032 Camerino, Italy}
\affil[4]{INFN–Sezione di Perugia, Via A. Pascoli, I-06123 Perugia, Italy}

\maketitle	

\begin{abstract}
We study the possibility of discriminating between two bosonic dephasing quantum channels.
We show that unambiguous discrimination is not realizable. We then consider discrimination with nonzero error probability and minimize this latter in the absence of input constraints. In the presence of an input energy constraint, we derive an upper bound on the error probability. 
Finally, we extend these results from single-shot to multi-shot discrimination, envisaging the asymptotic behavior.
\end{abstract}

%%%%%%%%%%%%%%%%%%%%%%%%%%%%%%%%%%%%%%%%%%%%%%%%%%%%%%%%%%%%%%%%%%%%%%%%%%%

\section{Introduction}

Since any physical process can be formally described by a quantum channel, i.e. a linear completely positive and trace preserving map on the space of states, the discrimination between two or more quantum channels became a relevant task in quantum information theory \cite{GLN05, S05, WY06, H09, MrSm, DFY09}. 
In infinite-dimensional Hilbert space, the attention has been confined to the discrimination of Gaussian channels because these represent cases where the problem can be treated analytically.
Actually, we can say that the characterization of continuous variable maps beyond the Gaussian perimeter has just started.
It involves the bosonic dephasing channel for which some results about quantum capacities have been obtained \cite{AMM20, Dehdashti2022QuantumCO, LW23,AMM23}. {These progresses also motivate the characterization of the capabilities in distinguishing between two bosonic dephasing quantum channels. A work along this direction has been recently reported \cite{PRXQuantum}. There, the results are derived assuming unconstrained input states. Here, we shall consider the possibility of having input energy constraint, as this becomes important from a practical perspective.
 }

We show that unambiguous discrimination is not accomplishable with bosonic dephasing quantum channels. Then, we find the optimal error probability minimized over all possible inputs, {providing an alternative derivation with respect to Ref.\cite{PRXQuantum}.}
 When the input energy constraint is present, we derive an upper bound on the error probability. Finally, we extend 
these results from single-shot to multi-shot discrimination, envisaging the asymptotic behavior of the performance.

%%%%%%%%%%%%%%%%%%%%%%%%%%%%%%%%%%%%%%%%%%%%%%%%%%%%%%%%%%%%%%

\section{The bosonic dephasing quantum channel}

Let ${\cal B}(\cH)$ be the set of bounded operators on an infinite dimensional separable Hilbert space $\cH$. 
Considering the subset ${\cal T}(\cH)\subset {\cal B}(\cH)$ of trace-class operators, any quantum channel 
 results a linear and completely positive trace preserving map $\mathcal{N}:{\cal T}(\cH)\to{\cal T}(\cH)$.

The action of the bosonic dephasing quantum channel ${\cN}_{\gamma}: \mathcal{T}(\cH)\rightarrow \mathcal{T}(\cH)$, with dephasing rate $\gamma\geq 0$, can be  defined as follows \cite{AMM20}
\begin{equation}\label{defNg}
  \cN_{\gamma}(\rho)=\int_{-\infty}^{\infty} d\theta p'_{\gamma}(\theta) e^{-ia^\dagger a\theta }\rho e^{ia^\dagger a\theta } ,
\end{equation}
where $p'_\gamma$ is the normal distribution
\begin{equation}
p'_\gamma(\theta)=\frac{1}{\sqrt{2\pi\gamma}}e^{-\frac{\theta^2}{2\gamma}}.
\end{equation}
Here $a^\dagger, a$ are the bosonic ladder operators on $\cH$. 

Notice that Eq.\eqref{defNg} can be recast in the form presented in Ref.\cite{LW23} 
\begin{equation}\label{Ng}
  \cN_{\gamma}(\rho)=\int_{-\pi}^{\pi} d\theta p_{\gamma}(\theta) e^{-ia^\dagger a\theta }\rho e^{ia^\dagger a\theta },
\end{equation}
with $p_\gamma$ the wrapped normal distribution on $[-\pi,\pi]$, 
i.e.\footnote{ This can also be regarded as the Weil transform 
of the function $e^{-\theta^2/(2\gamma)}/\sqrt{2\pi\gamma} $ and expressed in terms of Jacobi theta function
as $\Theta(i\theta/\gamma;2\pi i/\gamma) e^{-\theta^2/(2\gamma)}/\sqrt{2\pi\gamma}$. }
\bea
p_\gamma(\theta)=\frac{1}{\sqrt{2\pi\gamma}}\sum_{k=-\infty}^{+\infty}e^{-\frac{1}{2\gamma}(\theta+2\pi k)^2}.
\eea
In fact, from Eq.\eqref{defNg}, we can write
\begin{equation}
\cN_{\gamma}(\rho)=\sum_{k\in \mathbb{Z}}\int_{2\pi k-\pi}^{2\pi k+\pi} d\theta p'_{\gamma}(\theta) e^{-ia^\dagger a\theta }\rho e^{ia^\dagger a\theta }.
\end{equation}
Defining a new variable $\alpha=\theta-2\pi k$, we find
\begin{equation}
\cN_{\gamma}(\rho)=\sum_{k\in \mathbb{Z}}\int_{-\pi}^{\pi} d\alpha p'_{\gamma}(\alpha+2\pi k) 
e^{-ia^\dagger a \alpha{ -}2\pi k ia^\dagger a }\rho e^{ia^\dagger a \alpha+2\pi k ia^\dagger a }.
\end{equation}
However, 
{ 
expressing the input state in the Fock basis
\be\label{rhoFock}
\rho=\sum_{m,n=0}^{\infty} \rho_{m,n}\ket m \bra n,
\ee
and noticing that 
$ e^{2\pi k ia^\dagger a }\vert m\rangle
 =\vert m\rangle$,}
we can write
\begin{equation}
\cN_{\gamma}(\rho)=\sum_{k\in \mathbb{Z}}\int_{-\pi}^{\pi} d\alpha p'_{\gamma}(\alpha+2\pi k) e^{-ia^\dagger a \alpha}\rho e^{ia^\dagger a \alpha }.
\end{equation}
{ Making explicit use of Eq.\eqref{rhoFock}}
we can also obtain
\be\label{cn}
\cN_\gamma \left(\rho\right)={\sum_{m,n=0}^{\infty} \int_{-\pi}^{\pi} d\theta p_{\gamma}(\theta){ e^{-i\theta(m-n) }}\rho_{m,n}\ket m \bra n  }=\sum_{m,n=0}^{\infty} e^{-\frac{1}{2}\gamma \left(m-n\right)^2} \rho_{m,n}\ket m \bra n.
\ee

%%%

\begin{definition}\label{defHprod}
  Let us assume that $A$ and $B$ are two $n\times n$ matrices. The Hadamard product $A\circ B$ is an $n\times n$ matrix whose entries are $(A\circ B)_{ij}=A_{ij} B_{ij}$, $i, j=1, \cdots, n$.
\end{definition}

For more information about Hadamard matrices, see Ref.\cite{HJ91}.

\begin{definition}\label{defTmatrix}
  A Toeplitz matrix of size $M\in\mathbb{N}$ is defined as follows
  \begin{equation}\label{Tmatrix}
  T_M(t)=\left(
  \begin{array}{ccccc}
    t_0 & t_{-1} & t_{-2} & \cdots & t_{-M+1} \\
       t_1 & t_0 & t_{-1} & \cdots & t_{-M+2} \\
          t_2 & t_1 & t_0 & \cdots & t_{-M+3} \\
               \vdots & \vdots & \vdots & \ddots & \vdots \\
                    t_{M-1} & t_{M-2} & t_{M-3} & \cdots & t_0 \\
                       \end{array}
                          \right),
   \end{equation}
   where $t_{-M+1}, \cdots, t_{-1}, t_0, t_1,\cdots, t_{M-1}$ are complex numbers. In other words, the entries of a Toeplitz matrix are written as 
   $[T_M(t)]_{mn}=t_{m-n}$, for $m, n=0, 1, \cdots, M-1$. For our purposes, we will consider symmetric infinite dimensional Toeplitz matrices, where $t_0, t_1, \cdots$ are Fourier coefficients of an absolutely integrable function $t\in L^{(1)}[-\pi,\pi]$. Specifically,
\begin{equation}
  t_n=\frac{1}{2\pi} \int_{-\pi}^{\pi} {\rm d\theta}\,\, t(\theta) e^{-i\theta n}.
\end{equation}
\end{definition}

Using Definitions \ref{defHprod}, \ref{defTmatrix}, the quantum dephasing channel \eqref{cn} can be written as follows
\begin{equation}\label{re:14}
  \cN_\gamma \left(\rho\right)=\rho\circ T_\infty(p_{\gamma}),
\end{equation}
where $ T_\infty(p_{\gamma})$ is the infinite dimensional Toeplitz matrix with entries
\bea\label{re:15}
\left[T_\infty(p_\gamma)\right]_{m,n}\equiv t_{m-n} =\frac{1}{2\pi}\int_{-\pi}^{\pi}  p_\gamma(\theta)e^{i\theta(m-n)} {\rm d}\theta.
\eea

%%%%%%%%%%%%%%%%%%%%%%%%%%%%%%%%%%%%%%%%%%%%%%%%%%%%%%%%%%%%%%%%%%%%%%%%%%%

\section{Preliminaries on the discrimination of dephasing channels}

Given  two bosonic dephasing channels ${\cN}_{\gamma_1}$ and ${\cN}_{\gamma_2}$, { with $\gamma_1\neq \gamma_2$, 
occurring with probabilities $q_1$ and $q_2$ respectively ($q_1+q_2=1$)}, we aim to discriminate between them. This discrimination can eventually be subjected to an energy constraint on the channels' input, that is ${\rm Tr}\left( \rho H\right)\leq E$, where $H$ is the energy observable
(whose eigenstates $H|n\rangle=n|n\rangle, \, n=0,1, \ldots$ form the Fock basis of $\cH$) and $E\geq 0$.

In principle, we can proceed with two different strategies. One is to perform unambiguous discrimination, where there are no errors, but an inconclusive outcome is possible. The goal here is to minimize the probability of such an event. The other strategy is to allow for wrong outcomes (channel ${\cN}_{\gamma_1}$  misidentified as ${\cN}_{\gamma_2}$ and vice-versa) and minimize the error probability.

\begin{remark}
Since ${\rm Tr}$ is a continuous linear functional, for every trace class operator, we can exchange the
integral (or sum) with ${\rm Tr}$. This fact will be often used in the following.
\end{remark}

%%%%%%%%%

\subsection{Unambiguous Discrimination}

The goal is to find a POVM  $\{\Pi_0,\Pi_1,\Pi_2\}$, where elements correspond to inconclusive outcome, channel ${\cN}_{\gamma_1}$ outcome, and channel ${\cN}_{\gamma_2}$ outcome respectively. In order not to have errors, it must be 
\begin{equation}\label{ortcond}
{\rm Tr}\left(\Pi_1 {\cN}_{\gamma_2}(\rho)\right)={\rm Tr}\left(\Pi_2 {\cN}_{\gamma_1}(\rho)\right)=0.
\end{equation}

\begin{theorem}
There does not exist a POVM that implements unambiguous discrimination of the dephasing channel.
\end{theorem}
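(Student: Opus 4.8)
The plan is to reduce the channel problem to a statement about the supports of the two possible output states and to show that these supports are always nested, so that the no-error constraints \eqref{ortcond} leave no room for a conclusive outcome. Fix an arbitrary input $\rho$ and set $\sigma_i=\cN_{\gamma_i}(\rho)$, $i=1,2$. Since $\Pi_j\ge 0$ and $\sigma_i\ge 0$, Eq.\eqref{ortcond} is equivalent to $\operatorname{supp}(\Pi_1)\subseteq\ker\sigma_2$ and $\operatorname{supp}(\Pi_2)\subseteq\ker\sigma_1$. A correct conclusive verdict for $\cN_{\gamma_1}$ occurs with nonzero probability only if ${\rm Tr}(\Pi_1\sigma_1)>0$, which forces $\ker\sigma_2\cap\operatorname{supp}(\sigma_1)\neq\{0\}$, i.e. $\operatorname{supp}(\sigma_1)\not\subseteq\operatorname{supp}(\sigma_2)$; symmetrically a conclusive verdict for $\cN_{\gamma_2}$ requires $\operatorname{supp}(\sigma_2)\not\subseteq\operatorname{supp}(\sigma_1)$. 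Genuine unambiguous discrimination needs both conclusive events to be possible, so it suffices to prove that for every $\rho$ at least one of the two inclusions holds.

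The heart of the argument is a $\gamma$-independent characterization of $\operatorname{supp}(\sigma_i)$. Starting from $\cN_\gamma(\rho)=\int_{-\pi}^{\pi}d\theta\,p_\gamma(\theta)\,U_\theta\rho U_\theta^\dagger$ with $U_\theta=e^{-ia^\dagger a\theta}$, and using that the wrapped normal density obeys $p_\gamma(\theta)>0$ for all $\theta$ when $\gamma>0$, I would argue as follows. For a unit vector $w$, $\langle w|\sigma_i|w\rangle=\int p_{\gamma_i}(\theta)\,\langle w|U_\theta\rho U_\theta^\dagger|w\rangle\,d\theta$ has a nonnegative integrand, so it vanishes iff $\langle w|U_\theta\rho U_\theta^\dagger|w\rangle=0$ for almost every $\theta$, hence for all $\theta$ by continuity of $\theta\mapsto U_\theta\rho U_\theta^\dagger$. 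Thus $\ker\sigma_i=\big(\overline{\operatorname{span}}\bigcup_\theta U_\theta\operatorname{supp}(\rho)\big)^\perp$, giving $\operatorname{supp}(\sigma_i)=\overline{\operatorname{span}}\bigcup_\theta U_\theta\operatorname{supp}(\rho)$, which does not depend on $\gamma_i$. A Fourier-extraction step then pins this down: for $v=\sum_k v_k|k\rangle\in\operatorname{supp}(\rho)$ one has $\frac{1}{2\pi}\int_{-\pi}^{\pi}e^{im\theta}U_\theta v\,d\theta=v_m|m\rangle$, which lies in the closed span of $\{U_\theta v\}$, so every $|m\rangle$ with $v_m\neq 0$ lies in $\operatorname{supp}(\sigma_i)$. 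Conversely, by \eqref{cn} the multiplier is $1$ on the diagonal, so $\sigma_i$ has the same diagonal as $\rho$, and the bound $|(\sigma_i)_{mn}|^2\le(\sigma_i)_{mm}(\sigma_i)_{nn}$ forces $\operatorname{supp}(\sigma_i)\subseteq\overline{\operatorname{span}}\{|m\rangle:\rho_{mm}>0\}$. Combining the two inclusions yields $\operatorname{supp}(\sigma_i)=\overline{\operatorname{span}}\{|m\rangle:\rho_{mm}>0\}=:S(\rho)$, depending only on the diagonal of $\rho$.

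To close: when $\gamma_1,\gamma_2>0$ we get $\operatorname{supp}(\sigma_1)=\operatorname{supp}(\sigma_2)=S(\rho)$, so $\operatorname{supp}(\Pi_1)\subseteq\ker\sigma_2=S(\rho)^\perp=\ker\sigma_1$ forces ${\rm Tr}(\Pi_1\sigma_1)=0$, and symmetrically ${\rm Tr}(\Pi_2\sigma_2)=0$, so no conclusive outcome is ever possible for any input. If one rate vanishes, say $\gamma_1=0$, then $\sigma_1=\rho$ and $\operatorname{supp}(\sigma_1)\subseteq S(\rho)=\operatorname{supp}(\sigma_2)$, which already forbids conclusively certifying the identity channel; hence discrimination still fails. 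The main obstacle will be the support lemma in infinite dimensions, namely justifying the trace–integral exchange (already licensed by the Remark), upgrading the ``almost every $\theta$'' to ``all $\theta$'' via continuity, and the convergence of the Bochner integral isolating $v_m|m\rangle$; the diagonal-preservation and positivity inequalities supply the reverse inclusion routinely.
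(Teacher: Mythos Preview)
Your proof is correct and rests on the same engine as the paper's---the strict positivity of the wrapped normal density $p_\gamma(\theta)$ on $[-\pi,\pi]$, which makes any vanishing condition of the form $\int p_\gamma(\theta)f(\theta)\,d\theta=0$ with $f\ge 0$ independent of $\gamma$---but you package it differently. The paper works directly at the level of traces: it writes ${\rm Tr}\big(\Pi_j\,\cN_\gamma(\rho)\big)=\int p_\gamma(\theta)\,{\rm Tr}\big(U(\theta)\Pi_j U(\theta)^\dagger\rho\big)\,d\theta$, observes that ${\rm Tr}(\Pi_1\cN_{\gamma_2}(\rho))=0$ forces ${\rm Tr}\big(U(\theta)\Pi_1 U(\theta)^\dagger\rho\big)=0$ for every $\theta$, and then integrates against $p_{\gamma_1}$ to conclude ${\rm Tr}(\Pi_1\cN_{\gamma_1}(\rho))=0$ in one line. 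You instead characterise $\operatorname{supp}\big(\cN_\gamma(\rho)\big)$ explicitly, proving the stronger intermediate fact that it equals $\overline{\operatorname{span}}\{|m\rangle:\rho_{mm}>0\}$ for every $\gamma>0$. This is more informative but also more laborious: the Fourier-extraction and diagonal-preservation steps are not needed for the theorem, since the $\gamma$-independence of $\operatorname{supp}(\sigma_i)$ already follows from your first identification of it with $\overline{\operatorname{span}}\bigcup_\theta U_\theta\operatorname{supp}(\rho)$, without ever naming that subspace in the Fock basis. The paper's route is shorter; yours yields a reusable structural lemma about the output support and, as a bonus, cleanly isolates the degenerate case $\gamma_1=0$.
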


\begin{proof}
According to  Eq.\eqref{re:15}, for any density matrix and positive semi-definite operator $\Pi\leq I$,
\begin{equation}
  {\rm Tr}(\Pi T_\infty(p_{\gamma})\circ \rho)={\rm Tr} \left( \Pi \sum_{m,n=0}^{\infty} \int_{-\pi}^{\pi} {\rm d}\theta p_{\gamma}(\theta)e^{i\theta(n-m)}  \rho_{m,n}\ket m \bra n  \right).
\end{equation}
Defining $U(\theta)$ as a diagonal matrix in the Fock basis with entries  $e^{ik\theta}$, $k=0,1,2,\cdots$, we have
\begin{align}
{\rm Tr}(\Pi T_\infty(p_{\gamma})\circ \rho) &= {\rm Tr}\left(  \int_{-\pi}^{\pi} {\rm d}\theta p_{\gamma}(\theta) \sqrt{\Pi} U(\theta)^\dagger \rho  U(\theta)\sqrt{\Pi} \right)\\
&= \int_{-\pi}^{\pi} {\rm d}\theta  p_{\gamma}(\theta){\rm Tr}\left(\sqrt{\Pi} U(\theta)^\dagger\rho U(\theta)\sqrt{\Pi}\right)\\
&=   \int_{-\pi}^{\pi} {\rm d}\theta  p_{\gamma}(\theta){\rm Tr}\left(U(\theta) \Pi U(\theta)^\dagger\rho\right)\geq 0 \label{re:12}.
\end{align}

The last inequality comes from the fact that $p_\gamma(\theta)\geq 0$, and $\rho$ is a density matrix.
If we now set ${\rm Tr}(\Pi T_\infty(p_{\gamma})\circ \rho)=0$, then we should have ${\rm Tr}\left(U(\theta) \Pi U(\theta)^\dagger\rho\right)=0$.

According to \eqref{ortcond}, let us assume that ${\rm Tr}(\Pi_1 T_\infty(p_{\gamma_2})\circ \rho)={\rm Tr}(\Pi_2 T_\infty(p_{\gamma_1})\circ \rho)=0$ for given two positive semi-definite operators $\Pi_1, \Pi_2\neq 0$ and $\Pi_1, \Pi_2\leq I$. 
Then, by means of relation \eqref{re:12}, we find for every $\theta$ that
\begin{equation}
  {\rm Tr}\left(U(\theta) \Pi_1 U(\theta)^\dagger\rho\right)={\rm Tr}\left(U(\theta) \Pi_2 U(\theta)^\dagger\rho\right)=0.
\end{equation}
Therefore,
\begin{multline}
  {\rm Tr}(\Pi_1 T_\infty(p_{\gamma_1})\circ \rho)+{\rm Tr}(\Pi_2 T_\infty(p_{\gamma_2})\circ \rho)=\\
  \int_{-\pi}^{\pi} {\rm d}\theta  \left[
  p_{\gamma_1}(\theta)  {\rm Tr}\left( U(\theta)\Pi_1 U(\theta)^\dagger\rho\right)    +p_{\gamma_2}(\theta){\rm Tr}\left( U(\theta)\Pi_2 U(\theta)^\dagger\rho\right)\right]=0.
\end{multline}
This simply means that the probability of unambiguously identify ${\cN}_{\gamma_1}$ and ${\cN}_{\gamma_2}$ vanishes.
\end{proof}

%%%%%%%%%

\subsection{Minimum error probability}

If we allow errors in the discrimination of channels output states ${\cN}_{\gamma_1}(\rho)$ and ${\cN}_{\gamma_2}(\rho)$, 
it is known (see e.g. \cite{MW20}) that the minimum error probability reads
{
\begin{equation}\label{perr}
  p_{err}(\cN_{\gamma_1}(\rho),\cN_{\gamma_2}(\rho))=\frac{1}{2}\left( 1-\|{q_1}\cN_{\gamma_1}(\rho)-{q_2}\cN_{\gamma_2}(\rho)\|_{1}\right),
\end{equation}
}
 where
\begin{equation}\label{eq:TNorm}
\| A\|_1=\rm{Tr}\left(\sqrt{ A^{\dag}A}\right).
\end{equation}
For optimal channel discrimination, the probability \eqref{perr} should be minimized over all possible inputs.
Let us define  
\begin{equation}
\Delta{(\rho)}:={ q_1}\cN_{\gamma_1}(\rho)-{ q_2}\cN_{\gamma_2}(\rho). 
\end{equation}
Then, we have the following optimization problem
\bea
\max_{\rho\in {{\mathcal{T}(\cH)}}}\|\Delta(\rho)\|_{1}
=\max_{\rho\in {{\mathcal{T}(\cH)}}}\|{ q_1}\cN_{\gamma_1}(\rho)-{ q_2}\cN_{\gamma_2}(\rho)\|_{1}.
\eea
On the other hand, we know that the trace norm satisfies the convexity property, hence
\begin{equation}\label{re:31}
\max_{\rho\in {{\mathcal{T}(\cH)}}}\|\Delta(\rho)\|_{1}=
  \max_{ \rho\in \mathcal{P}(\cH)}\|\Delta({ \rho})\|_{1},
\end{equation}
where $ \mathcal{P}(\cH)$ denotes the set of pure states on $\cH$.

When we have an input energy constraint, the problem becomes
\bea
\max_{\rho\in {{\mathcal{P}(\cH)}}}\|\Delta(\rho)\|_{1}, \quad
{\text {s.t.}} \quad{\rm Tr}\left( \rho H\right)\leq E.
\label{re:dis}
\eea

%%%%%%%%%%%%%%%

%%%%%%%%%%%%%%%%%%%%%%%%%%%%%%%%%%%%%%%%%%%%%%%%%%%%%%%

\section{Single-shot discrimination}

In this section, we analyze the minimum error probability when discriminating two bosonic dephasing quantum channels 
in a single-shot scenario. 

\begin{theorem}\label{thm:theorem2}
Without an input energy constrain, it is
  \begin{equation}
 \max_{\rho\in \mathcal{P}(\cH)}\|\Delta(\rho)\|_{1}=\int_{-\pi}^{\pi} {\rm d}\theta \left\vert{ q_1} p_{\gamma_1}(\theta)-{ q_2}p_{\gamma_2}(\theta)\right\vert.
\end{equation}
\end{theorem}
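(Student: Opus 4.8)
The plan is to prove two matching bounds on $\sup_{\rho\in\mathcal{P}(\cH)}\|\Delta(\rho)\|_{1}$. Throughout I would write $f(\theta):=q_1 p_{\gamma_1}(\theta)-q_2 p_{\gamma_2}(\theta)$, so that, exactly as in the proof of the unambiguous case, $\Delta(\rho)=\int_{-\pi}^{\pi}{\rm d}\theta\,f(\theta)\,U(\theta)^{\dagger}\rho\,U(\theta)$ with $U(\theta)$ the diagonal phase rotation $U(\theta)\ket{k}=e^{ik\theta}\ket{k}$. By \eqref{re:31} it is enough to work with pure states. The upper bound is the easy direction; the achievability is where the real work lies.

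For the upper bound I would simply use the triangle inequality for the trace norm together with its unitary invariance: since $\|U(\theta)^{\dagger}\rho\,U(\theta)\|_{1}=\|\rho\|_{1}=1$ for every state and every $\theta$,
\[
\|\Delta(\rho)\|_{1}=\left\|\int_{-\pi}^{\pi}{\rm d}\theta\,f(\theta)\,U(\theta)^{\dagger}\rho\,U(\theta)\right\|_{1}\le\int_{-\pi}^{\pi}{\rm d}\theta\,|f(\theta)|\,\|U(\theta)^{\dagger}\rho\,U(\theta)\|_{1}=\int_{-\pi}^{\pi}{\rm d}\theta\,|f(\theta)|.
\]
This already gives $\sup_{\rho}\|\Delta(\rho)\|_{1}\le\int_{-\pi}^{\pi}|f|$.

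For achievability I would exhibit a sequence of pure states approaching this value. The natural candidate is the flat superposition $\ket{\psi_N}=\frac{1}{\sqrt N}\sum_{n=0}^{N-1}\ket n$, which behaves like an (unnormalizable) phase eigenstate as $N\to\infty$ and should let one read off the rotation angle $\theta$ optimally. Using \eqref{cn}, the matrix elements of $\Delta(\ket{\psi_N}\!\bra{\psi_N})$ are $\frac1N\lambda_{m-n}$ for $m,n\in\{0,\dots,N-1\}$ and zero otherwise, where $\lambda_k=q_1 e^{-\gamma_1 k^2/2}-q_2 e^{-\gamma_2 k^2/2}=\int_{-\pi}^{\pi}f(\theta)e^{-ik\theta}{\rm d}\theta$. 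Hence $\Delta(\ket{\psi_N}\!\bra{\psi_N})$ is, up to the factor $1/N$, the finite Hermitian Toeplitz matrix $T_N$ with these entries (Hermitian since $f$ is real and even), and $\|\Delta(\ket{\psi_N}\!\bra{\psi_N})\|_{1}=\frac1N\sum_{j=1}^{N}|\mu_j^{(N)}|$, with $\mu_j^{(N)}$ the eigenvalues of $T_N$.

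The key step is then the first Szeg\H{o} limit theorem: because $f$ is bounded, $T_N$ has symbol $2\pi f$ (note $\lambda_k=2\pi\hat f_k$ in the convention of Definition \ref{defTmatrix}), its eigenvalues lie in a fixed compact interval, and for the continuous function $F(x)=|x|$ one obtains $\frac1N\sum_{j=1}^{N}|\mu_j^{(N)}|\to\frac{1}{2\pi}\int_{-\pi}^{\pi}|2\pi f(\theta)|\,{\rm d}\theta=\int_{-\pi}^{\pi}|f(\theta)|\,{\rm d}\theta$. Combining this with the upper bound yields the claim. I expect the main obstacle to be exactly this last step: one must invoke Szeg\H{o}'s theorem in the form valid for merely continuous test functions, keep the $2\pi$ normalization between the channel's Fourier data and Definition \ref{defTmatrix} consistent, and note that the value $\int_{-\pi}^{\pi}|f|$ is approached in the limit rather than attained by any finite-energy state, so the stated maximum is really a supremum realized asymptotically by $\ket{\psi_N}$.
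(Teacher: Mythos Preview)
Your proof is correct and follows essentially the same route as the paper: the upper bound via the triangle inequality and unitary invariance of the trace norm, and the lower bound via the flat superposition $\ket{\psi_N}$ (equivalently $\rho_M=\frac{1}{M}T_M$ in the paper's notation) together with Szeg\H{o}'s theorem applied with $F(x)=|x|$, including the same care about the $2\pi$ normalization between the channel's Fourier data and Definition~\ref{defTmatrix}. Your remark that the stated maximum is really a supremum attained only asymptotically is a fair observation that the paper leaves implicit.
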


\begin{proof}
Using Relation \eqref{cn}, we have
\begin{align}\label{eqUB}
\max_{\rho\in \mathcal{P}(\cH)}\|\Delta(\rho)\|_{1}& =  \left\Vert  \sum_{m,n=0}^{\infty} \int_{-\pi}^{\pi} {\rm d}\theta \left({ q_1}p_{\gamma_1}(\theta)-{ q_2}p_{\gamma_2}(\theta)\right)e^{i\theta(n-m)}  \rho_{m,n}\ket m \bra n  \right\Vert_1\\
&= \left\Vert  \int_{-\pi}^{\pi} {\rm d}\theta \left({ q_1}p_{\gamma_1}(\theta)-{ q_2}p_{\gamma_2}(\theta)\right)\sum_{m,n=0}^{\infty}   e^{i\theta(n-m)}  \rho_{m,n}\ket m \bra n  \right\Vert_1.
\end{align}
Using $U(\theta)$ defined as a diagonal matrix in the Fock basis with entries  $e^{ik\theta}$, $k=0,1,2,\cdots$, we have
\begin{align}
\max_{\rho\in \mathcal{P}(\cH)}\|\Delta(\rho)\|_{1} &= \left\Vert  \int_{-\pi}^{\pi} {\rm d}\theta \left({ q_1}p_{\gamma_1}(\theta)-{ q_2}p_{\gamma_2}(\theta)\right) U(\theta)^\dagger\rho U(\theta) \right\Vert_1\\
&\leq \int_{-\pi}^{\pi} {\rm d}\theta \left\vert\left({ q_1}p_{\gamma_1}(\theta)-{ q_2}p_{\gamma_2}(\theta)\right)\right\vert \left\Vert U(\theta)^\dagger\rho U(\theta)\right\Vert_1\\
&=   \int_{-\pi}^{\pi} {\rm d}\theta \left\vert { q_1}p_{\gamma_1}(\theta)-{ q_2}p_{\gamma_2}(\theta)\right\vert\label{re:34}.
\end{align}
The last equality arises from the invariance of Shatten's norms under unitary transformations.

\bigskip

To prove the reverse inequality, we will resort to the following Theorem (see \cite{Sz20, GS84}).
\begin{theorem}\label{thm:sezgo}
Let $t:[-\pi, \pi]\to \mathbb{R}$ be an absolutely integrable function with $\inf_{x\in[-\pi,\pi]} t(x)$, $\sup_{x\in[-\pi,\pi]} t(x)$ finite numbers.    
If $T_M(t)$ is a $M\times M$ Toeplitz matrix and $\lambda_j(T_M(t))$ denotes the $j^{th}$ eigenvalue of $T_M(t)$, then 
	\begin{equation}
		\lim_{M\to\infty} \frac{1}{M}\sum_{j=1}^{M} F(\lambda_j(T_M(t)))=\frac{1}{2\pi}\int_{-\pi}^{\pi} {\rm d}\theta F(t(\theta)),
	\end{equation}
	where $F:[\inf_{x\in [-\pi,\pi]} t(x), \sup_{x\in [-\pi, \pi]} t(x)]\to  \mathbb{R}$ is any continuous function.
\end{theorem}

For a given positive $M\times M$ Toeplitz matrix
\begin{equation}\label{re:35}
  T_M=\left(
           \begin{array}{ccc}
             1 & \cdots & 1 \\
             \vdots & \cdots & \vdots \\
             1 & \cdots & 1 \\
           \end{array}
         \right),
\end{equation}
we consider the density matrix $\rho_M=\frac{1}{M}T_M$. We can extend $\rho_M$ to a density matrix on infinite dimensional Hilbert space $\cH$ with a canonical injection ($\cH_M\subset\cH_{M+1}\subset \ldots\subset\cH$, where $\cH_M$ is the $M$-dimensional Hilbert space spanned by the first $M$ vectors of the Fock basis of $\cH$), which implies
\begin{align}
\|\Delta(\rho_M)\|_{1}&=\Vert T_\infty({ q_1}p_{\gamma_1}-{ q_2}p_{\gamma_2})\circ \rho_M\Vert_1\\
&= \frac{1}{M}\Vert P_M T_\infty ({ q_1}p_{\gamma_1}-{ q_2}p_{\gamma_2})P_M\Vert_1,
\end{align}
where $P_M=\sum_{k=0}^{M-1} \vert k\rangle\langle k\vert$ is the projector onto $\cH_M$. Now, using Theorem \ref{thm:sezgo} with $F(x)=\vert x\vert$, we have 
  \begin{align}
  \max_{\rho\in \mathcal{T}(\cH)}\|\Delta(\rho)\|_{1} &\geq \lim_{M\to \infty}\|\Delta(\rho_M)\|_{1}\\
  &=2\pi \lim_{M\to \infty} \left\Vert \frac{1}{2\pi M} \sum_{m,n=0}^{M-1} \left({ q_1}e^{-\frac{1}{2}\gamma_1 \left(m-n\right)^2}-{ q_2}e^{-\frac{1}{2}\gamma_2 \left(m-n\right)^2}\right) \ket m \bra n  \right\Vert_1\\
  &=\int_{-\pi}^{\pi} {\rm d}\theta \left\vert { q_1}p_{\gamma_1}(\theta)-{ q_2}p_{\gamma_2}(\theta)\right\vert.
\end{align}
We emphasize that  the Fourier coefficients in the first equality are $\frac{1}{2\pi}e^{-\frac{1}{2}\gamma \left(m-n\right)^2}$.
\end{proof}

Notice that the result of Theorem \ref{thm:theorem2} was already established in Ref.\cite{PRXQuantum} (see Eq.(8) there);
 however, here it is derived with a different approach.

%%%%%%%%%%%%%%%%%%%%%%%%%%%%%%%%%%%%%

Unfortunately, in the presence of input energy constraint, we cannot achieve equality \eqref{re:34} likewise in the proof of Theorem \ref{thm:theorem2}.
In fact, we should have  
${\rm Tr}(\rho H)\leq E$. However, setting $\rho_M=T_M/\Vert T_M\Vert_1$, we get  from Eq.\eqref{Tmatrix}
\begin{equation}
{\rm Tr}\left(\sum_{k, i, j=0}^{ M-1} k  \vert i\rangle\langle j\vert\vert k\rangle\langle k\vert\right) \leq E \Vert T_M\Vert_1 .
\end{equation}
It results
\begin{equation}
  \frac{M(M+1)}{2} \leq M E ,
\end{equation}
which is impossible to satisfy for large $M$.

With the following theorem, we will provide a simple upper bound to the error probability in presence of input energy constraint.
{ 
 
\begin{theorem}\label{thm3}
For a given energy $E>0$, we have the following:
\begin{itemize}
  \item If $E\geq 1/2$, then
\begin{equation}\label{re:45}
     \max_{ \rho\in \mathcal{P}(\cH),\, {\rm Tr}(\rho H)\leq E}\|\Delta({ \rho})\|_{1}\geq
    \max_{m_*\in\{1,m_{ext},\lfloor 2E\rfloor\}}
    \left\vert  q_1 e^{-\frac{m_*^2}{2}\gamma_1}- q_2 e^{-\frac{m_*^2}{2}\gamma_2}\right\vert,
\end{equation}
with
\begin{equation}    
    m_{ext}=\left[ \sqrt{2 \frac{\ln(q_1\gamma_1)-\ln(q_2\gamma_2)}{\gamma_1-\gamma_2}} \right],
        \quad \text{and} \quad
 1\leq  m_{ext} \leq \lfloor 2E\rfloor,
\end{equation}
being $\lfloor\bullet\rfloor$ the biggest integer that is smaller than $\bullet$, and $[\bullet]$ the closest integer to $\bullet$.
  \item If $E < 1/2$, then
\begin{equation}\label{re:46}
     \max_{ \rho\in \mathcal{P}(\cH),\, {\rm Tr}(\rho H)\leq E}\|\Delta({ \rho})\|_{1}\geq
     \max_{m_*\in\{1,m_{ext}\}}
    2\sqrt{\frac{E}{m_*} \left(1-\frac{E}{m_*}\right)} \,  
    \left\vert  q_1 e^{-\frac{m_*^2}{2}\gamma_1}- q_2 e^{-\frac{m_*^2}{2}\gamma_2}\right\vert,
\end{equation}
\end{itemize}
\end{theorem}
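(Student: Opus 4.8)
The plan is to prove both displayed inequalities by exhibiting an explicit one–parameter family of admissible pure states and lower–bounding $\|\Delta(\rho)\|_1$ on it, since a lower bound on a maximum is obtained simply by evaluating the objective at any feasible point. The natural ansatz, dictated by the Hadamard form of Eq.~\eqref{cn}, is the two–level Fock superposition
\[
|\psi\rangle=\sqrt{1-p}\,|0\rangle+\sqrt{p}\,|m_*\rangle,\qquad m_*\in\mathbb{N},\ p\in[0,1],
\]
whose energy is ${\rm Tr}(\rho H)=p\,m_*$. The virtue of this choice is that $\Delta(\rho)$ is then supported on the two–dimensional subspace spanned by $\{|0\rangle,|m_*\rangle\}$, where by Eq.~\eqref{cn} it is the $2\times2$ Hermitian matrix with diagonal entries $(q_1-q_2)(1-p)$ and $(q_1-q_2)p$ and off–diagonal entry $\bigl(q_1 e^{-\gamma_1 m_*^2/2}-q_2 e^{-\gamma_2 m_*^2/2}\bigr)\sqrt{p(1-p)}$.

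The key estimate I would isolate as a one–line lemma is that any $2\times2$ Hermitian matrix $A$ satisfies $\|A\|_1\ge 2|A_{12}|$, because $\|A\|_1=|\lambda_+|+|\lambda_-|\ge|\lambda_+-\lambda_-|=\sqrt{(A_{11}-A_{22})^2+4|A_{12}|^2}\ge 2|A_{12}|$. Applying this to the matrix above yields, for every admissible $m_*$ and $p$,
\[
\|\Delta(\rho)\|_1\ge 2\sqrt{p(1-p)}\,\left|q_1 e^{-\gamma_1 m_*^2/2}-q_2 e^{-\gamma_2 m_*^2/2}\right|.
\]
This single inequality drives both cases; what changes between them is only the largest value of $p$ permitted by the energy budget ${\rm Tr}(\rho H)=p\,m_*\le E$.

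Since $p\mapsto 2\sqrt{p(1-p)}$ is increasing on $[0,1/2]$, I would saturate the constraint. When $E\ge 1/2$ the value $p=1/2$ is admissible exactly for $m_*\le\lfloor 2E\rfloor$, where the prefactor equals $1$ and one recovers the bound \eqref{re:45}; when $E<1/2$ the best feasible choice is $p=E/m_*<1/2$, producing the coherence prefactor $2\sqrt{(E/m_*)(1-E/m_*)}$ of \eqref{re:46}. Because each admissible $m_*$ gives a valid lower bound, one may take the maximum over any convenient set of candidate integers. The set $\{1,m_{ext},\lfloor 2E\rfloor\}$ (resp. $\{1,m_{ext}\}$) is chosen because $m_{ext}$ is the stationary point of $m\mapsto q_1 e^{-\gamma_1 m^2/2}-q_2 e^{-\gamma_2 m^2/2}$, obtained by setting the derivative to zero, i.e. solving $q_1\gamma_1 e^{-\gamma_1 m^2/2}=q_2\gamma_2 e^{-\gamma_2 m^2/2}$ and rounding to the nearest integer, while $m=1$ and $m=\lfloor 2E\rfloor$ are the endpoints of the admissible range.

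The main point requiring care is the interplay of the two competing factors in the $E<1/2$ regime: enlarging $m_*$ increases $\bigl|q_1 e^{-\gamma_1 m_*^2/2}-q_2 e^{-\gamma_2 m_*^2/2}\bigr|$ up to $m_{ext}$ but simultaneously shrinks the coherence prefactor $2\sqrt{(E/m_*)(1-E/m_*)}$, so neither endpoint nor the extremizer is universally optimal—which is precisely why the statement retains a maximum over candidates rather than a closed form. The only remaining checks are admissibility, namely that $p=E/m_*\le 1$ (automatic since $m_*\ge 1>E$) and that the rounded $m_{ext}$ lies in $[1,\lfloor 2E\rfloor]$ under the stated hypothesis, after which both inequalities follow directly.
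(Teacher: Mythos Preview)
Your proof is correct and follows essentially the same approach as the paper: both use the two-level ansatz $\sqrt{1-p}\,|0\rangle+\sqrt{p}\,|m_*\rangle$, compute the resulting $2\times 2$ block of $\Delta(\rho)$, and split into the cases $p=1/2$ feasible (i.e.\ $m_*\le\lfloor 2E\rfloor$) versus $p=E/m_*<1/2$. Your explicit lemma $\|A\|_1\ge 2|A_{12}|$ is in fact slightly more careful than the paper, which silently drops the diagonal entries $(q_1-q_2)r_0$, $(q_1-q_2)r_m$ and states Eq.~\eqref{eq:r1r2} as an equality that strictly holds only for $q_1=q_2$; your inequality handles general priors without this slip.
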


\begin{proof}
Consider the following pure qubit state embedded in $\mathcal{P}(\cH)$
\begin{equation}
  \rho=r_0 \vert 0\rangle\langle 0\vert +\sqrt{r_0r_m}\vert 0\rangle\langle m\vert+\sqrt{r_0r_m}\vert m\rangle\langle 0\vert
  +r_m\vert m\rangle\langle m\vert,
\end{equation}
where $r_0$ and $r_m$ are non-negative real numbers satisfying $r_0+r_m = 1$.

Using relation \eqref{cn}, for given $\gamma_1$ and $\gamma_2$, we have
\begin{align}
  \left({q_1}T_\infty(p_{\gamma_1})-{q_2}T_\infty(p_{\gamma_2})\right)\circ \rho
  &=\sqrt{r_0r_m}\left( q_1 e^{-\frac{m^2}{2}\gamma_1}- q_2 e^{-\frac{m^2}{2}\gamma_2 }\right)\vert 0\rangle\langle m\vert
  +\sqrt{r_0r_m}\left( q_1 e^{-\frac{m^2}{2}\gamma_1 }- q_2 e^{-\frac{m^2}{2}\gamma_2 }\right)\vert m\rangle\langle 0\vert.
\end{align}
It follows that
\begin{equation}\label{eq:r1r2}
\left\Vert {q_1}T_\infty(p_{\gamma_1})-{q_2}T_\infty(p_{\gamma_2})\circ \rho\right\Vert_1=2\left\vert q_1 e^{-\frac{m^2}{2}\gamma_1}- q_2 e^{-\frac{m^2}{2}\gamma_2}   \right\vert\sqrt{r_0r_m}.
\end{equation}
Now, from the energy constraint, we have 
\begin{equation}\label{eq:constr}
{\rm Tr}(\rho H)\leq E \; \Longrightarrow \; m r_m\leq E. 
\end{equation}

For the case $E\geq 1/2$, we can take the maximum value $\frac{1}{2}$ of the term $\sqrt{r_0r_m}$ in Eq.\eqref{eq:r1r2}, and then, we are left with the 
expression $\left\vert q_1 e^{-\frac{m^2}{2}\gamma_1}- q_2 e^{-\frac{m^2}{2}\gamma_2} \right\vert$, which we can maximize over $m$ obtaining the maximum either at the extremum
\begin{equation}    
    m_{ext}=\left[ \sqrt{2 \frac{\ln(q_1\gamma_1)-\ln(q_2\gamma_2)}{\gamma_1-\gamma_2}} \right],
\end{equation} 
or at the border points $m=1$, $m=\lfloor 2E\rfloor$, given that  Eq.\eqref{eq:constr} now implies 
\begin{equation}
 1\leq  m \leq \lfloor 2E\rfloor. 
\end{equation}

Instead, for $E<1/2$ we cannot take $r_0=r_m=1/2$. However, since $r_0r_m=r_m(1-r_m)$ is monotonically
increasing in the interval $[0, 1/2 ]$,  we take $r_m = E/m_*$ (saturating the bound \eqref{eq:constr}).

Then, Eq.\eqref{eq:r1r2} becomes
\begin{equation}
\left\Vert {q_1}T_\infty(p_{\gamma_1})-{q_2}T_\infty(p_{\gamma_2})\circ \rho\right\Vert_1=2\left\vert q_1 e^{-\frac{m_*^2}{2}\gamma_1}- q_2 e^{-\frac{m_*^2}{2}\gamma_2}   \right\vert\sqrt{\frac{E}{m_*}\left(1-\frac{E}{m_*}\right)}.
\end{equation}

\end{proof}
}

{ The term $\sqrt{r_0r_m}$ in Eq.\eqref{eq:r1r2} represents the off diagonal elements of $\rho$ which cause the decoherence and facilitate the discrimination. This leads to the threshold $E=1/2$ in Theorem \ref{thm3}, which can then be understood as consequence of the fact that we restricted our attention to states in a two dimensional subspace. It is clear that the lower bound for $E\geq 1/2$ is not tight, as it might not depend on $E$. The following theorem provides a tighter lower bound in such a case.}

\begin{theorem}\label{thm:energyconstriant}
For a given energy $E\geq 1/2$, we have 
\begin{equation}\label{re:56}
  \max_{ \rho\in \mathcal{P}(\cH),\, {\rm Tr}(\rho H)\leq E}\|\Delta({ \rho})\|_{1}
   \geq \frac{1}{\lfloor 2E\rfloor+1} \left\Vert P_{\lfloor 2E\rfloor+1}\left ({ q_1}T_\infty(p_{\gamma_1})-{ q_2}T_\infty(p_{\gamma_2})\right)\right\Vert_1,
  \end{equation}
  where $P_{\lfloor 2E\rfloor+1}=\sum_{n=0}^{\lfloor 2E\rfloor} \vert n\rangle\langle n\vert$. 
  \end{theorem}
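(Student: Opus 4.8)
The plan is to prove the bound by exhibiting a single explicit \emph{feasible} pure state whose value already equals the claimed right-hand side; since \eqref{re:56} is a maximization over all admissible inputs, any such competitor furnishes a lower bound. Motivated by the unconstrained proof of Theorem \ref{thm:theorem2}, where the equal-superposition (all-ones) states $\rho_M=T_M/M$ were the extremizing family, I would simply truncate that family at the largest level the energy budget permits. Concretely, set $N:=\lfloor 2E\rfloor+1$ and take $\rho=\ket{\psi}\bra{\psi}$ with $\ket{\psi}=\frac{1}{\sqrt N}\sum_{n=0}^{N-1}\ket n$, i.e. the matrix all of whose entries equal $1/N$ on the top-left $N\times N$ block and vanish elsewhere.

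First I would check that $\rho$ is admissible. It is manifestly rank one, hence $\rho\in\mathcal{P}(\cH)$. For the energy constraint, ${\rm Tr}(\rho H)=\frac1N\sum_{n=0}^{N-1}n=\frac{N-1}{2}=\frac{\lfloor 2E\rfloor}{2}$, and since $\lfloor 2E\rfloor\le 2E$ this is $\le E$. Thus $\rho$ is a valid competitor, so that $\max_{\rho\in\mathcal{P}(\cH),\,{\rm Tr}(\rho H)\le E}\|\Delta(\rho)\|_1\ge\|\Delta(\rho)\|_1$. Note that $N=\lfloor 2E\rfloor+1$ is exactly the largest integer for which the equal-superposition state stays within the budget, which is why it is the natural candidate; moreover $E\ge 1/2$ guarantees $N\ge 2$, so the off-diagonal (decohering) entries that drive the discrimination are present.

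Next I would evaluate $\|\Delta(\rho)\|_1$ through the Hadamard-product representation \eqref{re:14}, writing $\Delta(\rho)=\big(q_1T_\infty(p_{\gamma_1})-q_2T_\infty(p_{\gamma_2})\big)\circ\rho$. Because every nonzero entry of $\rho$ equals $1/N$ and is confined to the first $N$ Fock levels, the Hadamard product merely extracts the corresponding block of the Toeplitz difference and rescales it by $1/N$; that is, $\Delta(\rho)=\frac1N\,P_N\big(q_1T_\infty(p_{\gamma_1})-q_2T_\infty(p_{\gamma_2})\big)P_N$, exactly as in the intermediate step of the proof of Theorem \ref{thm:theorem2} but at the fixed dimension $N$. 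Taking trace norms then yields $\|\Delta(\rho)\|_1=\frac{1}{\lfloor 2E\rfloor+1}\,\big\|P_{\lfloor 2E\rfloor+1}\big(q_1T_\infty(p_{\gamma_1})-q_2T_\infty(p_{\gamma_2})\big)P_{\lfloor 2E\rfloor+1}\big\|_1$, which is the content of \eqref{re:56}.

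The argument is short, so the main work is bookkeeping rather than a genuine obstacle. Two points deserve care: (i) the arithmetic $\frac{N-1}{2}\le E$ hinges precisely on $\lfloor 2E\rfloor\le 2E$, and this is what forces the floor and the index $\lfloor 2E\rfloor+1$ to appear; and (ii) the operator produced by the Hadamard product is the two-sided compression $P_N(\cdots)P_N$, which — since $\rho$ is supported entirely on $\cH_N$ — may equivalently be read as the $N\times N$ matrix in the compressed sense intended by the single projector written in \eqref{re:56}. In contrast with Theorem \ref{thm:theorem2}, no limiting procedure or Szeg\H{o}-type result (Theorem \ref{thm:sezgo}) is needed here, as the bound is attained at finite dimension.
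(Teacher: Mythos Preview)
Your proof is correct and follows essentially the same approach as the paper: both choose the equal-superposition state $\ket{\psi}=\frac{1}{\sqrt{\lfloor 2E\rfloor+1}}\sum_{n=0}^{\lfloor 2E\rfloor}\ket n$, verify the energy constraint via $\langle\psi|H|\psi\rangle=\lfloor 2E\rfloor/2\le E$, and identify the resulting Hadamard product with the projected Toeplitz difference. Your write-up is in fact slightly more explicit in checking the arithmetic and in flagging the two-sided compression $P_N(\cdots)P_N$ that underlies the single-projector notation in \eqref{re:56}.
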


\begin{proof}
Let us  define a state
\begin{equation}
\vert \psi_M\rangle:=\frac{1}{\sqrt{M+1}}(\vert 0\rangle+\vert 1\rangle+\cdots+\vert M\rangle),
\end{equation}
where $M=\lfloor 2E\rfloor$. Since $M\leq 2E$, we get
\begin{equation}
  \langle \psi_M\vert H \vert \psi_M\rangle\leq E.
\end{equation}
This implies
\begin{align}
  \max_{ \rho\in \mathcal{P}(\cH),\, {\rm Tr}(\rho H)\leq E}\|\Delta({ \rho})\|_{1}
  &\geq  \left\Vert \left({ q_1}T_\infty(p_{\gamma_1})-{ q_2}T_\infty(p_{\gamma_2})\right)\circ \vert \psi_M\rangle\langle\psi_M\vert\right\Vert_1\\
 &{ =}\frac{1}{\lfloor 2E\rfloor+1} \left\Vert P_{\lfloor 2E\rfloor+1}\left ({ q_1}T_\infty(p_{\gamma_1})-{ q_2} T_\infty(p_{\gamma_2})\right)\right\Vert_1,
  \end{align}
  where $P_{\lfloor 2E\rfloor+1}=\sum_{n=0}^{\lfloor 2E\rfloor} \vert n\rangle\langle n\vert$. 
\end{proof}

For a given matrix $A$, we know that $\Vert A\Vert_F \leq \Vert A\Vert_1$, where $\Vert \cdot\Vert_F$ is the Frobenius norm defined as $\Vert A\Vert_F:= \sqrt{{\rm Tr}(A A^\dagger)}$. From this, we immediately arrive at the following Corollary of Theorem \ref{thm:energyconstriant}.

\begin{corollary}\label{thm:Frobeniusenergyconstriant}
For a given energy $E\geq 1/2$, we have 
\begin{align}
  \max_{ \rho\in \mathcal{P}(\cH),\, {\rm Tr}(\rho H)\leq E}\|\Delta({ \rho})\|_{1}
   &\geq \frac{1}{\lfloor 2E\rfloor+1} \sqrt{\sum_{m, n=0}^{\lfloor 2E\rfloor}\left\vert \langle m\vert \left({ q_1}T_\infty(p_{\gamma_1})-{ q_2}T_\infty(p_{\gamma_2})\right)\vert n\rangle \right\vert^2}\\
   &= \frac{1}{\lfloor 2E\rfloor+1} \sqrt{\sum_{k=1}^{\lfloor 2E\rfloor}  (\lfloor 2E\rfloor-k)\left\vert \langle 0\vert \left({ q_1}T_\infty(p_{\gamma_1})-{ q_2}T_\infty(p_{\gamma_2})\right)\vert k\rangle \right\vert^2}.
  \end{align}
\end{corollary}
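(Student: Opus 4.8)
The plan is to read the statement purely as a consequence of Theorem~\ref{thm:energyconstriant} combined with the elementary matrix-norm inequality $\|A\|_F\le\|A\|_1$ quoted just above. First I would set $M:=\lfloor 2E\rfloor$ and apply that inequality to the matrix $A=P_{M+1}\bigl(q_1 T_\infty(p_{\gamma_1})-q_2 T_\infty(p_{\gamma_2})\bigr)$ appearing on the right-hand side of \eqref{re:56}. Since $\|A\|_F=\sqrt{{\rm Tr}(AA^\dagger)}=\sqrt{\sum_{m,n}|A_{mn}|^2}$, this step converts the trace-norm lower bound of Theorem~\ref{thm:energyconstriant} directly into the first displayed inequality of the corollary, with the double sum running over $m,n\in\{0,\dots,M\}$ because the projector confines the nonzero entries to that block.

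The second line is then a closed-form evaluation of the Frobenius norm that exploits the Toeplitz structure. By \eqref{cn} and \eqref{re:15}, the $(m,n)$ entry of $q_1 T_\infty(p_{\gamma_1})-q_2 T_\infty(p_{\gamma_2})$ equals $q_1 e^{-\frac12\gamma_1(m-n)^2}-q_2 e^{-\frac12\gamma_2(m-n)^2}$, which depends on $m$ and $n$ only through $|m-n|$. Hence every entry on a fixed diagonal is equal to $\langle 0|(\cdots)|k\rangle$ with $k=|m-n|$, and I would collapse the double sum $\sum_{m,n=0}^{M}|A_{mn}|^2$ into a single sum over $k$ by counting how many pairs $(m,n)$ in the block share a given value of $m-n$: the $k$-th super- and sub-diagonal each contain $M+1-k$ entries, while the main diagonal contributes its own term. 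Substituting these multiplicities and using $|A_{mn}|=|\langle 0|(\cdots)|k\rangle|$ produces the weighted single sum in the last line.

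There is no genuine obstacle here beyond bookkeeping: the corollary is a weakening of Theorem~\ref{thm:energyconstriant} (both through $\|\cdot\|_F\le\|\cdot\|_1$ and through the restriction to the finite block), so the real work is the combinatorial weight attached to each Toeplitz diagonal. The one step that demands care is this multiplicity count—keeping the main diagonal $k=0$ separate from the off-diagonals and tracking the contribution of the symmetric upper/lower placement—since that is exactly where a stray constant or index shift can creep in. Once the counting is fixed the identity is immediate, and the resulting bound has the practical merit of being a finite sum of the squared channel-difference coefficients $q_1 e^{-\frac12\gamma_1 k^2}-q_2 e^{-\frac12\gamma_2 k^2}$, hence trivial to evaluate numerically for any $E$.
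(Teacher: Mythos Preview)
Your proposal is exactly the paper's argument: the text preceding the corollary states that it follows ``immediately'' from Theorem~\ref{thm:energyconstriant} via $\|A\|_F\le\|A\|_1$, and the one-line justification after the statement invokes the Toeplitz identity $\langle m|(\cdots)|n\rangle=\langle 0|(\cdots)|k\rangle$ with $k=|m-n|$, precisely as you outline.

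Your instinct that the diagonal-multiplicity bookkeeping is the delicate point is well founded. Carrying it out on the $(M{+}1)\times(M{+}1)$ block with $M=\lfloor 2E\rfloor$ gives $M{+}1$ entries on the main diagonal (with value $c_0=q_1-q_2$, nonzero unless $q_1=q_2$) and $2(M{+}1-k)$ entries with $|m-n|=k\ge1$, so the Frobenius norm squared is $(M{+}1)|c_0|^2+2\sum_{k=1}^{M}(M{+}1-k)|c_k|^2$; the weight $(\lfloor 2E\rfloor-k)$ printed in the second line and the omission of the $k=0$ contribution thus appear to be slips in the paper rather than in your method.
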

  Using  Eq.\eqref{Tmatrix}, the equality comes from
  $$\langle m\vert \left({ q_1}T_\infty(p_{\gamma_1})-{ q_2}T_\infty(p_{\gamma_2})\right)\vert n\rangle=\langle 0\vert \left({ q_1}T_\infty(p_{\gamma_1})-{ q_2}T_\infty(p_{\gamma_2})\right)\vert k\rangle,$$
  where $k=|m-n|$.

\bigskip

Now, we are going to compare the lower bounds found in Theorem \ref{thm3}, Theorem \ref{thm:energyconstriant}, and Corollary \ref{thm:Frobeniusenergyconstriant}, together with the result of Theorem \ref{thm:theorem2}. Clearly, by virtue of relation \eqref{perr}, they will provide upper bounds for the error probability.
From Fig. \ref{fig1} we can see that as long as $E<1/2$, Theorem \ref{thm3}, 
provides the best bound. However, when $E\geq 1/2$, Theorem \ref{thm:energyconstriant} provides the best bound, 
which tends to saturate to the result of Theorem \ref{thm:theorem2} for $E\gg 1/2$.

\medskip

\begin{figure}[H]
\begin{center}
          \includegraphics[width=0.55\textwidth]{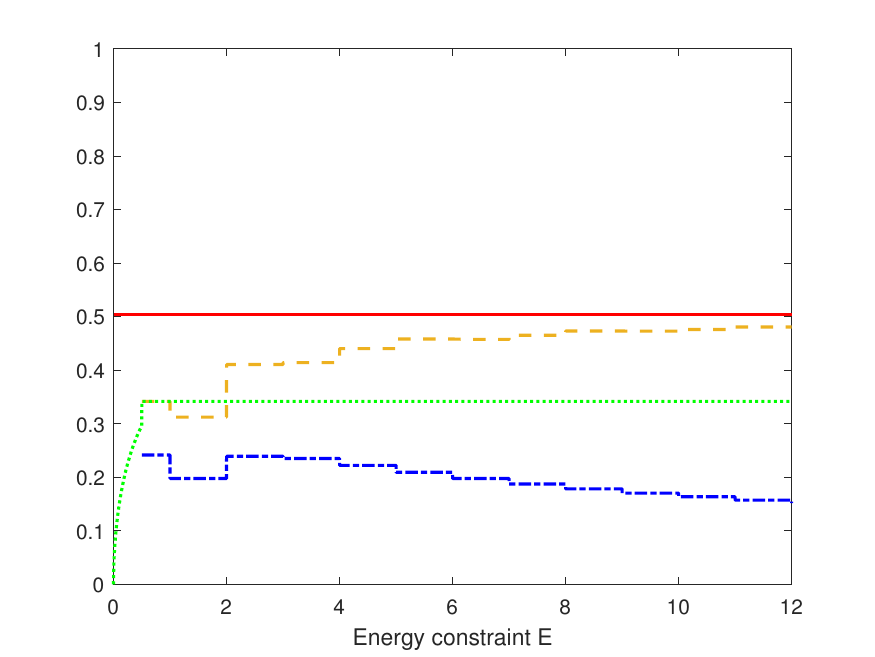}
          \caption{{ For $\gamma_1=0.1$, $\gamma_2=1$ and  $q_1=q_2=0.5$, it is shown vs energy $E$ the quantity $\max_\rho\|\Delta(\rho)\|_1$ of Theorem \ref{thm:theorem2} ({ solid red line}), the bound of Theorem \ref{thm3} ({ dotted green line}),  the bound of Theorem \ref{thm:energyconstriant} ({ dashed gold line}),  and the bound of Corollary \ref{thm:Frobeniusenergyconstriant} ({ dash-dotted blue line}).}}
\label{fig1}
\end{center}
\end{figure}

\bigskip

{ Fig.  \ref{fig2} gives an idea of the tightness of the bound of Theorem \ref{thm:energyconstriant}
for a fixed value of energy. In fact, the figure shows the r.h.s. of Eq.\eqref{thm:energyconstriant} contrasted with the l.h.s. 
evaluated by a maximization over randomly sampled sates satisfying the energy constraint. Clearly the two quantities collapse to zero for $\gamma_1=\gamma_2$.}

\begin{figure}[H]
\begin{center}
          \includegraphics[width=0.55\textwidth]{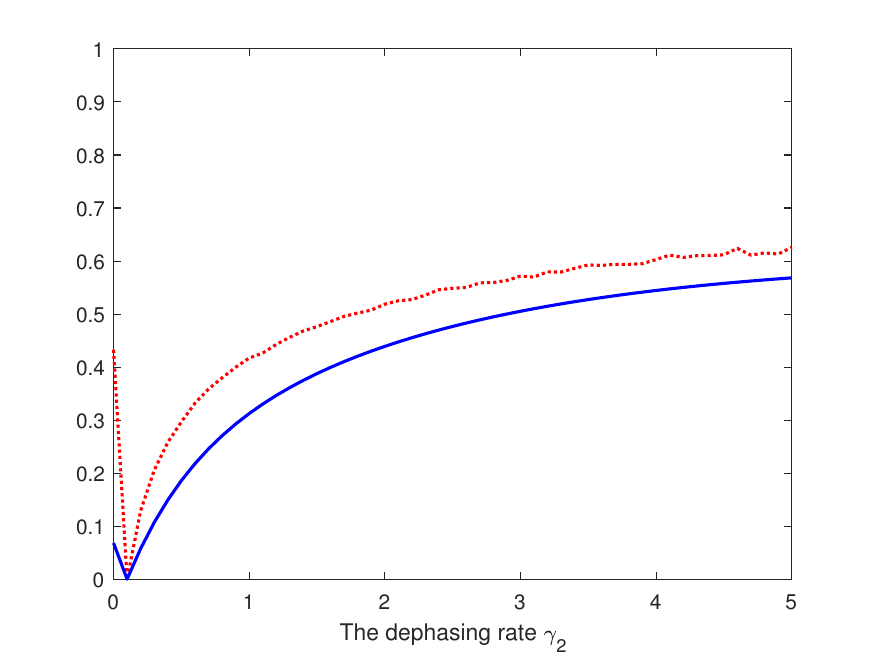}
         \caption{{ For $\gamma_1=0.1$, $q_1=q_2=0.5$ and $E=2$,  it is shown vs $\gamma_2$ the quantity  $\max_{ \rho\in \mathcal{P}(\cH),\, {\rm Tr}(\rho H)\leq E}\|\Delta({ \rho})\|_{1}$ (dotted red line) together with the bound of Theorem \ref{thm:energyconstriant} (solid blue line).}}
         \label{fig2}
\end{center}
\end{figure}

%%%%%%%%%%%%%%%%%%%%%%%%%%%%%%%%%%%%%%%%%%%%%%%%%%%%%%%%%%%%%%

\section{Multiple-shot discrimination}

In this Section we analyze the performance of discrimination across an arbitrary number of trials.
First, let us define $\| \Delta^{(n)}(\rho^{(n)})\|_1:=\|{ q_1} \cN_{\gamma_1}^{\otimes n}(\rho^{(n)})-{ q_2} \cN_{\gamma_2}^{\otimes n}(\rho^{(n)})\|_1$.
Then we have the following.

\begin{theorem}\label{thm:7}
For the discrimination of two bosonic depahsing channels $\cN_{\gamma_1}$ and $\cN_{\gamma_2}$ without energy constraint, we have
\begin{equation}
   \max_{\rho^{(n)}\in {\mathcal{P}}(\cH^{\otimes n})}
   {\| \Delta^{(n)}(\rho^{(n)})\|_1} = \int_{[-\pi, \pi]^n} {\rm d}\theta_1\cdots {\rm d}\theta_n \left\vert ({ q_1}\Pi_{k=1}^n p_{\gamma_1}(\theta_k)-{ q_2}\Pi_{k=1}^n p_{\gamma_2}(\theta_k))\right\vert.
\end{equation}
\end{theorem}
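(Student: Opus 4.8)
The plan is to mirror the two-sided argument of Theorem \ref{thm:theorem2}, lifting it from one mode to $n$ modes. First I would rewrite the tensor-power channels in the integral form that diagonalizes the dephasing. Setting $U(\vec\theta):=\bigotimes_{k=1}^{n} e^{i a_k^\dagger a_k \theta_k}$, a diagonal unitary in the multimode Fock basis, Eq.\eqref{Ng} gives
\[
\cN_\gamma^{\otimes n}(\rho)=\int_{[-\pi,\pi]^n}d\vec\theta\,\Big(\prod_{k=1}^{n} p_\gamma(\theta_k)\Big)\,U(\vec\theta)^\dagger\rho\,U(\vec\theta),
\]
so that
\[
\Delta^{(n)}(\rho)=\int_{[-\pi,\pi]^n}d\vec\theta\,\Big(q_1\prod_k p_{\gamma_1}(\theta_k)-q_2\prod_k p_{\gamma_2}(\theta_k)\Big)\,U(\vec\theta)^\dagger\rho\,U(\vec\theta).
\]
Taking the trace norm, using its convexity (the triangle inequality under the integral) together with its invariance under the unitary $U(\vec\theta)$ and $\|\rho\|_1=1$, yields the upper bound $\|\Delta^{(n)}(\rho)\|_1\le\int_{[-\pi,\pi]^n}\big|q_1\prod_k p_{\gamma_1}(\theta_k)-q_2\prod_k p_{\gamma_2}(\theta_k)\big|\,d\vec\theta$, valid for every (in particular every pure) input.

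For the reverse inequality I would exhibit a saturating sequence of pure product states, generalizing $\rho_M=T_M/M$. Take $|\psi_M\rangle=\big(M^{-1/2}\sum_{k=0}^{M-1}|k\rangle\big)^{\otimes n}$ and $\rho_M=|\psi_M\rangle\langle\psi_M|$, whose coherences are all equal to $M^{-n}$ on the box $\{0,\dots,M-1\}^n$. By Eq.\eqref{cn} applied to each mode, $\Delta^{(n)}(\rho_M)$ is the Hadamard product of this flat state with the multilevel symbol matrix $A$ of entries $A_{\vec m,\vec n}=q_1\prod_k e^{-\frac12\gamma_1(m_k-n_k)^2}-q_2\prod_k e^{-\frac12\gamma_2(m_k-n_k)^2}$, hence $\|\Delta^{(n)}(\rho_M)\|_1=M^{-n}\|P_M^{\otimes n}A\,P_M^{\otimes n}\|_1$. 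The key point is that $P_M^{\otimes n}A\,P_M^{\otimes n}$ is, up to the factor $(2\pi)^n$, exactly the $n$-level Toeplitz matrix $T_{M,\dots,M}(f)$ generated by the bounded real symbol $f(\vec\theta)=q_1\prod_k p_{\gamma_1}(\theta_k)-q_2\prod_k p_{\gamma_2}(\theta_k)$, since the Fourier coefficient of $p_\gamma$ in each mode is $e^{-\frac12\gamma k^2}$ and $f$ factorizes.

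Because $A$ is real symmetric (its entries depend only on $|m_k-n_k|$), the matrix is Hermitian and its trace norm equals $\sum_j|\lambda_j|$. Applying the multidimensional (multilevel) extension of Theorem \ref{thm:sezgo} with the continuous function $F(x)=|x|$ gives $\frac{1}{M^n}\sum_j|\lambda_j(T_{M,\dots,M}(f))|\to\frac{1}{(2\pi)^n}\int_{[-\pi,\pi]^n}|f(\vec\theta)|\,d\vec\theta$; reinstating the $(2\pi)^n$ then shows $\|\Delta^{(n)}(\rho_M)\|_1\to\int_{[-\pi,\pi]^n}|f|\,d\vec\theta$, matching the upper bound and closing the argument in the sense that the supremum over pure states equals the integral. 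I expect the main obstacle to be precisely this invocation of the multilevel Szegő theorem: unlike the one-dimensional Theorem \ref{thm:sezgo} quoted in the paper, here one needs its $n$-dimensional counterpart, and must verify its hypotheses (a real, essentially bounded symbol $f$ with finite essential infimum and supremum, and $F$ continuous on the range of $f$), all of which hold since each $p_\gamma$ is continuous and bounded. A secondary subtlety is confirming that $\Delta^{(n)}(\rho_M)$ truncates to a genuine multilevel Toeplitz matrix rather than to a mere tensor power, which is why the difference-of-products symbol $f$, and not a product symbol, is the correct object to feed into the theorem.
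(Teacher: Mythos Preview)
Your proposal is correct and follows essentially the same two-sided argument as the paper: the upper bound via the integral representation, triangle inequality and unitary invariance, and the lower bound via the flat product state $\rho_M^{\otimes n}$ together with the multilevel Szeg\H{o} theorem applied to $F(x)=|x|$. The two points you flagged as potential obstacles are exactly what the paper supplies in Appendix~\ref{app:multilevel}: Lemma~\ref{lemmultilevel} confirms that the truncated difference-of-tensor-powers is a genuine $n$-level Toeplitz matrix with the symbol $f(\vec\theta)=q_1\prod_k p_{\gamma_1}(\theta_k)-q_2\prod_k p_{\gamma_2}(\theta_k)$, and Theorem~\ref{thm9} is the multilevel extension of Theorem~\ref{thm:sezgo} you anticipated needing.
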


\begin{proof}

Let us first consider a density matrix $\rho^{(2)}\in{\mathcal{P}}(\cH^{\otimes 2})$ with the following representation
\begin{equation}
\rho^{(2)}=\sum_{m_1, n_1, m_2, n_2} \rho_{m_1, n_1, m_2, n_2} \vert m_1\rangle \langle n_1\vert \otimes \vert m_2\rangle\langle n_2\vert.
\end{equation}
Thanks to the continuity of the trace-norm, we can write
\begin{align}
   \cN^{\otimes 2}_{\gamma}(\rho^{(2)})&=\sum_{m_1, n_1, m_2, n_2} \rho_{m_1, n_1, m_2, n_2} \int_{-\pi}^{\pi}\int_{-\pi}^{\pi} {\rm d}\theta_1 {\rm d}\theta_2 p_{\gamma}(\theta_1) p_{\gamma}(\theta_2) e^{-i a^\dagger a\theta_1} \vert m_1\rangle\langle n_1\vert e^{i a^\dagger a\theta_1} \otimes  e^{-i a^\dagger a\theta_2} \vert m_2\rangle\langle n_2\vert e^{i a^\dagger a\theta_2}\\
   &=\sum_{m_1, n_1, m_2, n_2} \rho_{m_1, n_1, m_2, n_2}  \int_{-\pi}^{\pi}\int_{-\pi}^{\pi}  {\rm d}\theta_1 {\rm d}\theta_2 p_{\gamma}(\theta_1) p_{\gamma}(\theta_2) e^{-i\{ (m_1-n_1)\theta_1+ (m_2-n_2)\theta_2\}} \vert m_1\rangle\langle n_1\vert \otimes\vert m_2\rangle\langle n_2\vert \\
   &= \int_{-\pi}^{\pi}\int_{-\pi}^{\pi}  {\rm d}\theta_1 {\rm d}\theta_2 p_{\gamma}(\theta_1) p_{\gamma}(\theta_2) U(\theta_1)^\dagger\otimes U(\theta_2)^\dagger \rho^{(2)}  U(\theta_1)\otimes U(\theta_2),
\end{align}
where $U(\theta)$ is diagonal in the Fock basis with entries $e^{ik\theta}$, $k=0,1,2,\ldots$.

Extending this method to every $\rho^{(n)}\in {\mathcal{P}}(\cH^{\otimes n})$, $n=1, 2, \ldots$, we have
\begin{align}
{\| \Delta^{(n)}(\rho^{(n)})\|_1}
&=\left\Vert\int_{[-\pi, \pi]^n}  {\rm d}\theta_1\cdots {\rm d}\theta_n ({ q_1}\Pi_{k=1}^n p_{\gamma_1}(\theta_k)-{ q_2}\Pi_{k=1}^n p_{\gamma_2}(\theta_k))  \otimes_{k=1}^n U(\theta_k)^\dagger \rho^{(n)}  \otimes_{k=1}^n U(\theta_k) \right\Vert_1\\
&\leq \int_{[-\pi, \pi]^n} {\rm d}\theta_1\cdots {\rm d}\theta_n \left\vert { q_1}\Pi_{k=1}^n p_{\gamma_1}(\theta_k)-{ q_2}\Pi_{k=1}^n p_{\gamma_2}(\theta_k)\right\vert  \left\Vert\otimes_{k=1}^n U(\theta_k)^\dagger \rho^{(n)}  \otimes_{k=1}^n U(\theta_k) \right\Vert_1\\
&=\int_{[-\pi, \pi]^n} {\rm d}\theta_1\cdots {\rm d}\theta_n \left\vert ({ q_1}\Pi_{k=1}^n p_{\gamma_1}(\theta_k)-{ q_2}\Pi_{k=1}^n p_{\gamma_2}(\theta_k))\right\vert.
\end{align}

For the reverse inequality,
following Relation \eqref{re:35}, we define a density matrix $\rho_M^{\otimes n}\in {\mathcal{P}}(\cH^{\otimes n})$ with
\begin{equation}
  \rho_M=\frac{1}{M}T_M.
\end{equation}
It implies
\begin{align}
{\| \Delta^{(n)}(\rho_M^{\otimes n})\|_1}
&=\left\Vert { q_1}\left(T_\infty(p_{\gamma_1}{)}\right)^{\otimes n}\circ \rho_M^{\otimes n}- { q_2}\left(T_\infty(p_{\gamma_2}{)}\right)^{\otimes n}\circ \rho_M^{\otimes n} \right\Vert_1\\
&= \frac{1}{M^n}
\left\Vert P_M^{\otimes n}  \left( { q_1} \left(T_\infty (p_{\gamma_1}{)} \right)^{\otimes n}
- { q_2}\left( T_\infty (p_{\gamma_2}{)} \right)^{\otimes n} \right)     P_M^{\otimes n}
\right\Vert_1,
\end{align}
where $P_M=\sum_{k=0}^{M-1} \vert k\rangle\langle k\vert$ is the projector onto $\cH_M\subseteq \cH$. 

According to Lemma \ref{lemmultilevel} of Appendix \ref{app:multilevel}, $ \mathcal{N}_{\gamma_1}(\rho^{\otimes n})-\mathcal{N}_{\gamma_2}(\rho^{\otimes n}) $ turns out to be a multilevel Toeplitz matrix. Then, taking $F(x)=\vert x\vert$ in Theorem \ref{thm9} of Appendix \ref{app:multilevel}, we arrive at
  \begin{align}
\max_{\rho^{(n)}\in{ \mathcal{P}}(\cH^{\otimes n})} {\| \Delta^{(n)}(\rho^{(n)})\|_1 } 
 &\geq \lim_{k\to \infty}   {\| \Delta^{(n)}(\rho_k^{\otimes n})\|_1} \\
    &=\int_{[-\pi, \pi]^n} {\rm d}\theta_1\cdots {\rm d}\theta_k \left\vert{ q_1} \Pi_{k=1}^n p_{\gamma_1}(\theta_k)-{ q_2}\Pi_{k=1}^n p_{\gamma_2}(\theta_k)\right\vert.\label{re:91}
\end{align}
 \end{proof}

Notice that the result of Theorem \ref{thm:7} was already established in Ref.\cite{PRXQuantum} (see Eq.(12) there), 
however here it is derived with a different approach.

In order to consider the energy constraint, we introduce 
the Hermitian operator 
\begin{equation}
H^{(n)}=\left(H\otimes I\otimes\cdots\otimes I\right)+ \left(I\otimes H\otimes I\otimes \cdots \otimes I\right)+\cdots 
+\left(I\otimes \cdots \otimes I\otimes H\right).
\end{equation}

%%%

\begin{theorem}\label{thm:8}
For the discrimination of two bosonic dephasing channels $\cN_{\gamma_1}$ and $\cN_{\gamma_2}$ with energy constraint 
${\rm Tr}(\rho^{(n)} H^{(n)})\leq nE$, $H^{(n)}\in{\cal B}(\cH^{\otimes n})$, $\rho^{(n)}\in {\mathcal{P}}(\cH^{\otimes n})$,  we have the following:
{
\begin{itemize}
  \item If $E\geq 1/2$, then 

\begin{equation}\label{eq:maxalp}
   \max_{\rho^{(n)}\in {\mathcal{P}}(\cH^{\otimes n}),{\rm Tr}(\rho^{(n)} H^{(n)})\leq nE}
   {\| \Delta^{(n)}(\rho^{(n)})\|_1 } \geq 1-
   2\sqrt{q_1 q_2}\left(1- {\alpha_+}  \right)^{\frac{n}{2}},
\end{equation}
with
\begin{equation}
\alpha_+:=\frac{1}{4}\left(\frac{1}{\lfloor 2E\rfloor+1}\right)^2 \left\Vert P_{\lfloor 2E\rfloor+1}\left (T_\infty(p_{\gamma_1})-T_\infty(p_{\gamma_2})\right)\right\Vert_1^2,
\end{equation}
where $P_{\lfloor 2E\rfloor+1}$ is as in Theorem \ref{thm:energyconstriant}.

  \item If $E<1/2$, then 
\begin{equation}
   \max_{\rho^{(n)}\in \mathcal{P}(\cH^{\otimes n}),{\rm Tr}(\rho^{(n)} H^{(n)})\leq nE}
   {\| \Delta^{(n)}(\rho^{(n)})\|_1 } \geq 1-
   2\sqrt{q_1 q_2}\left(1-\alpha_-  \right)^{\frac{n}{2}},
   \end{equation}
with
\begin{equation}\label{re:89}
\alpha_-:=\frac{E}{m_*} \left(1-\frac{E}{m_*}\right)  
   \left\vert e^{-\frac{m_*^2}{2}\gamma_1}-  e^{-\frac{m_*^2}{2}\gamma_2}\right\vert^2.
\end{equation}
where $m_*$ is as in Eq.\eqref{re:46}.
\end{itemize}
}

\end{theorem}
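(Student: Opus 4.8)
The plan is to restrict the maximization to product input states of the form $\rho^{\otimes n}$, where $\rho$ is one of the single-copy states already used in the single-shot analysis. Since $H^{(n)}$ is a sum of local energy terms, one has ${\rm Tr}(\rho^{\otimes n} H^{(n)})=n\,{\rm Tr}(\rho H)$, so any $\rho$ obeying the single-shot constraint ${\rm Tr}(\rho H)\le E$ automatically yields an admissible $n$-copy state. Writing $\sigma_i:=\cN_{\gamma_i}(\rho)$, the corresponding output is $\cN_{\gamma_i}^{\otimes n}(\rho^{\otimes n})=\sigma_i^{\otimes n}$, which reduces the $n$-copy trace norm to a single-copy quantity.

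The heart of the argument is two standard estimates. First, the minimum-error relation \eqref{perr} together with the fidelity bound $p_{err}\le \sqrt{q_1 q_2}\,F(\sigma_1^{\otimes n},\sigma_2^{\otimes n})$, where $F(A,B)={\rm Tr}\sqrt{\sqrt{A}\,B\,\sqrt{A}}$, gives $\|q_1\sigma_1^{\otimes n}-q_2\sigma_2^{\otimes n}\|_1 \ge 1 - 2\sqrt{q_1 q_2}\,F(\sigma_1^{\otimes n},\sigma_2^{\otimes n})$. Second, the multiplicativity of the fidelity under tensor products, $F(\sigma_1^{\otimes n},\sigma_2^{\otimes n})=F(\sigma_1,\sigma_2)^n$, collapses the problem onto the single copies $\sigma_1,\sigma_2$. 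I would then invoke the Fuchs--van de Graaf inequality $F(\sigma_1,\sigma_2)^2 \le 1 - \tfrac14\|\sigma_1-\sigma_2\|_1^2$ to write $F(\sigma_1,\sigma_2)^n \le (1-\alpha)^{n/2}$ with $\alpha := \tfrac14\|\sigma_1-\sigma_2\|_1^2$, which yields the uniform lower bound $1 - 2\sqrt{q_1 q_2}\,(1-\alpha)^{n/2}$.

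It then remains to evaluate $\alpha$ on the two families of single-copy states that were already optimized in the single-shot setting, noting that $\sigma_1-\sigma_2=\left(T_\infty(p_{\gamma_1})-T_\infty(p_{\gamma_2})\right)\circ\rho$ is the \emph{unweighted} difference. For $E\ge 1/2$ I would take $\rho=|\psi_M\rangle\langle\psi_M|$ with $M=\lfloor 2E\rfloor$ as in Theorem \ref{thm:energyconstriant}; the computation there gives $\|\sigma_1-\sigma_2\|_1 = \frac{1}{\lfloor 2E\rfloor+1}\big\|P_{\lfloor 2E\rfloor+1}(T_\infty(p_{\gamma_1})-T_\infty(p_{\gamma_2}))\big\|_1$, so that $\alpha=\tfrac14\|\sigma_1-\sigma_2\|_1^2$ reproduces exactly $\alpha_+$. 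For $E<1/2$ I would take the qubit state of Theorem \ref{thm3} with $r_m=E/m_*$; the unit-weight version of Eq.\eqref{eq:r1r2} gives $\|\sigma_1-\sigma_2\|_1 = 2\sqrt{\tfrac{E}{m_*}\left(1-\tfrac{E}{m_*}\right)}\,\big|e^{-m_*^2\gamma_1/2}-e^{-m_*^2\gamma_2/2}\big|$, so that $\alpha$ reproduces $\alpha_-$.

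The main obstacle is not any single computation but pinning down the chain of inequalities so that the $n$-dependence factorizes cleanly: one must pass through the fidelity rather than the trace distance directly, because only the fidelity is multiplicative over tensor copies, and one must apply the error-probability bound in the weighted form carrying the priors $q_1,q_2$ while using Fuchs--van de Graaf in the direction that upper-bounds $F$. A minor point to verify is that the chosen product states are genuine pure elements of $\mathcal{P}(\cH^{\otimes n})$ and that the single-shot constraint suffices for the $n$-copy constraint; both follow immediately from $H^{(n)}$ being a sum of local terms.
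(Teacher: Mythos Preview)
Your proposal is correct and follows essentially the same route as the paper: restrict to product inputs $\rho^{\otimes n}$, pass through a fidelity-type quantity that is multiplicative over tensor copies, bound it back by the single-copy trace distance via a Fuchs--van de Graaf inequality, and then plug in the single-shot states of Theorems~\ref{thm3} and \ref{thm:energyconstriant} with $q_1=q_2=1/2$. The only cosmetic difference is that the paper uses the affinity ${\rm Tr}\sqrt{\sigma_1}\sqrt{\sigma_2}$ (via the inequalities $\|A-B\|_1\ge{\rm Tr}(A+B)-2\,{\rm Tr}\sqrt{A}\sqrt{B}$ and $\|\rho-\sigma\|_1\le 2\sqrt{1-({\rm Tr}\sqrt{\rho}\sqrt{\sigma})^2}$) whereas you use the Uhlmann fidelity $F$; both are multiplicative and both collapse to the same final bound $1-2\sqrt{q_1q_2}\,(1-\tfrac14\|\sigma_1-\sigma_2\|_1^2)^{n/2}$.
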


\begin{proof}
{
We exploit the following inequalities
\be
\|A-B\|_1 \geq {\rm{Tr}}(A+B)-2{\rm{Tr}}\left(\sqrt{A} \sqrt{B}\right),
\ee
where $A, B$ are positive operators \cite{ACMB2007}, and
\begin{equation}\label{rhosig}
\Vert \rho-\sigma\Vert_1\leq 2\sqrt{1-\left( \rm{Tr}\sqrt{\rho}\sqrt{\sigma}\right)^2},
\end{equation}
where $\rho,\sigma$ are density operators  \cite{Holevo}, to arrive at
\begin{align}
  \max_{\rho^{(n)}\in { \mathcal{P}}(\cH^{\otimes n}),\, {\rm Tr}(\rho^{(n)} H^{(n)})\leq nE}
  {\| \Delta^{(n)}(\rho^{(n)})\|_1 } 
  &\geq \max_{  \rho\in \mathcal{P}(\cH), \, {\rm Tr}(\rho H)\leq  E}
  {\| \Delta^{(n)}(\rho^{\otimes n})\|_1 } \\
  &\geq  \max_{ \rho\in \mathcal{P}(\cH), \,{\rm Tr}(\rho H)\leq  E} \left\{ 1-2\sqrt{q_1 q_2}\left(\sqrt{1-\frac{ \Vert \cN_{\gamma_1}(\rho)-\cN_{\gamma_2}(\rho) \Vert_1^2 }{4} }   \right)^n \right\}.\label{re:100}
\end{align}

For the case $E\geq 1/2$, 
using Theorem \ref{thm:energyconstriant} for one shot, with $q_1=q_2=1/2$, we have
\begin{equation}\label{re:88}
     \max_{ \rho\in \mathcal{P}(\cH), \,{\rm Tr}(\rho H)\leq  E} \frac{1}{2}\Vert \cN_{\gamma_1}(\rho)-\cN_{\gamma_2}(\rho) \Vert _{1}\geq
     \frac{1}{2(\lfloor 2E\rfloor+1)} \left\Vert P_{\lfloor 2E\rfloor+1}\left (T_\infty(p_{\gamma_1})-T_\infty(p_{\gamma_2})\right)\right\Vert_1,
\end{equation}
where $P_{\lfloor 2E\rfloor+1}=\sum_{n=0}^{\lfloor 2E\rfloor} \vert n\rangle\langle n\vert$.
}
Putting together Eqs.\eqref{re:100} and \eqref{re:88}, we finally obtain
{
\begin{equation}\label{eq:lbntimes}
 { \max_{\rho^{(n)}\in \mathcal{P}(\cH^{\otimes n}), \,{\rm Tr}(\rho^{(n)} H^{(n)})\leq nE}}
 {\| \Delta^{(n)}(\rho^{(n)})\|_1 }  \geq 1-2\sqrt{q_1 q_2}
\left[1-\frac{1}{4}\left(\frac{1}{\lfloor 2E\rfloor+1}\right)^2 \left\Vert P_{\lfloor 2E\rfloor+1}\left (T_\infty(p_{\gamma_1})-T_\infty(p_{\gamma_2})\right)\right\Vert_1^2 \right]^{n/2}.
  \end{equation}

For the case $E<1/2$, using Theorem \ref{thm3} for one shot, with $q_1=q_2=1/2$, we have
\begin{equation}\label{re:101}
     \max_{ \rho\in \mathcal{P}(\cH_2), \,{\rm Tr}(\rho H)\leq  E}\frac{1}{2}\Vert \cN_{\gamma_1}(\rho)-\cN_{\gamma_2}(\rho) \Vert _{1}\geq
     \sqrt{\frac{E}{m_*} \left(1-\frac{E}{m_*}\right)} \,  
    \left\vert  e^{-\frac{m_*^2}{2}\gamma_1}-  e^{-\frac{m_*^2}{2}\gamma_2}\right\vert.
\end{equation}
}
Putting together Eqs.\eqref{re:100} and \eqref{re:101} we finally obtain
\begin{equation}\label{eq:lbntimes}
{ \max_{\rho^{(n)}\in \mathcal{P}(\cH^{\otimes n}),\,{\rm Tr}(\rho^{(n)} H^{(n)})\leq nE}
 {\| \Delta^{(n)}(\rho^{(n)})\|_1 } \geq  
  1- 2\sqrt{q_1 q_2}\left[1-\frac{E}{m_*} \left(1-\frac{E}{m_*}\right)  
   \left\vert e^{-\frac{m_*^2}{2}\gamma_1}-  e^{-\frac{m_*^2}{2}\gamma_2}\right\vert^2
 \right]^{n/2}.
 }
   \end{equation}
\end{proof} 

{ In Fig.\ref{fig3} we can see that the bound of Theorem \ref{thm:8} saturates to a constant value. However, this latter is well separated  from the value of Theorem \ref{thm:7}, although for increasing $n$ the gap tends to reduce (see inset).
This indicates that the bound of Theorem \ref{thm:8} is not much tight. A fact that is shown by Fig.\ref{fig4}, where 
the r.h.s. of Eq.\eqref{eq:maxalp} is contrasted with the l.h.s. evaluated by a maximization over randomly sampled sates satisfying the
energy constraint. Clearly the two quantities collapse to zero for $\gamma_1 = \gamma_2$.}
\medskip

\begin{figure}[H]
\centering
\begin{overpic}[scale=0.8]{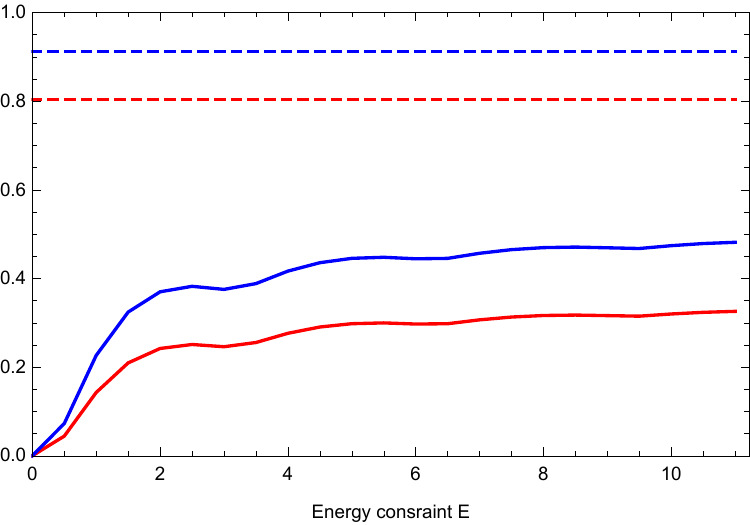}
 \put(64,43){\color{black}%
 \frame{\includegraphics[scale=.3]{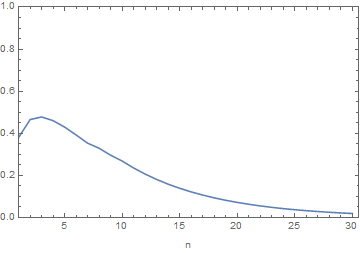}}}
 \end{overpic}
 \caption{{ For $\gamma_1=0.1$, $\gamma_2=1$ and $q_1=q_2=0.5$, it is shown vs energy $E$ the quantity $\max_{\rho^{(n)}\in \mathcal{P}(\cH^{\otimes n})}
   {\| \Delta^{(n)}(\rho^{(n)})\|_1}$ of Theorem \ref{thm:7} (dotted lines) and the 
           lower bound of Theorem \ref{thm:8} (solid lines). The colors red and blue refer respectively to $n=3, 5$. The inset shows the difference between 
the quantity $\max_{\rho^{(n)}\in \mathcal{P}(\cH^{\otimes n})}
   {\| \Delta^{(n)}(\rho^{(n)})\|_1}$ of Theorem \ref{thm:7} and the 
           lower bound of Theorem \ref{thm:8} for $E=12$ vs $n$.}}
\label{fig3}
\end{figure}

\begin{figure}[H]
\begin{center}
          \includegraphics[width=0.55\textwidth]{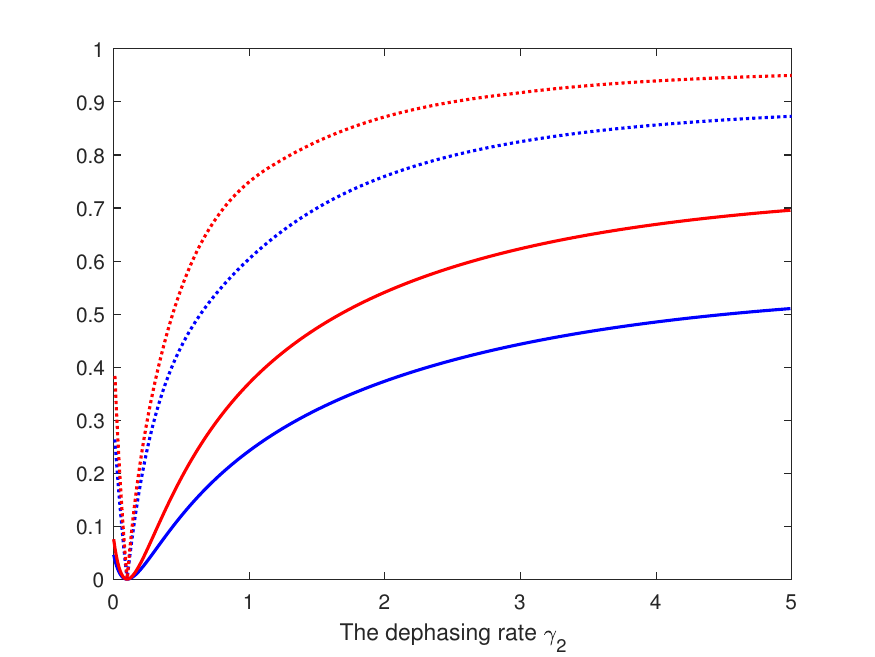}
          \caption{{ For $\gamma_1=0.1$, $q_1=q_2=0.5$ and $E=2$, it is shown vs $\gamma_2$  the quantity $\max_{\rho^{(n)}\in \mathcal{P}(\cH^{\otimes n}),
          \, {{\rm Tr}(\rho^{(n)}H^{(n)}\leq nE }}
   {\| \Delta^{(n)}(\rho^{(n)})\|_1}$ (dotted lines) together with the 
          lower bound of Theorem \ref{thm:8} (solid lines). The colors blue and red refer respectively to $n=3,5$.}}
\label{fig4}
\end{center}
\end{figure}

\medskip

{
From Theorem \ref{thm:8} and relation \eqref{perr} we can argue 
\begin{equation}\label{perrbound}
  p_{err}\leq \frac{1}{2}\exp\left[n\ln \sqrt{(1-{\alpha_{\pm}})} \right].
\end{equation}
In Eq.\eqref{perrbound} the quantity $\ln\sqrt{1-{\alpha_{\pm}}}$ represents the exponential decay rate of the upper bound to the error probability.

Notice that exploiting the following chain of inequalities \cite{ACMB2007}
 \begin{equation}\label{rhosigQ}
  1- Q(\rho, \sigma)\leq \frac{1}{2} \Vert \rho -\sigma\Vert_1 \leq \sqrt{1-Q^{2}(\rho, \sigma)} ,
 \end{equation}
 where
 \begin{equation}
   Q(\rho, \sigma):=\inf_{s\in [0, 1]} {\rm Tr}(\rho^s \sigma^{1-s}),
 \end{equation}
relation \eqref{rhosigQ} leads to the quantum Chernoff bound for discriminating between $\rho$ and $\sigma$ on $n$ trials, namely
 \begin{equation}
   p_{err}\left(\rho^{\otimes n},\sigma^{\otimes n}\right)\leq \frac{1}{2}\exp\left[ n\ln Q(\rho,\sigma)\right].
 \end{equation}}
Then, similarly to \eqref{perrbound}, we could write
\begin{align}\label{re:93}
p_{err}&\equiv
\min_{\rho^{(n)}\in {\mathcal{P}}(\cH^{\otimes n}),{\rm Tr}(\rho^{(n)} H^{(n)})\leq nE} p_{err}
\left({\cal N}^{\otimes n}_{\gamma_1}(\rho^{(n)}),{\cal N}^{\otimes n}_{\gamma_2}(\rho^{(n)})\right)\notag \\
&\leq 
\min_{\rho\in {\mathcal{P}}(\cH), {\rm rank}{\rho}=2,{\rm Tr}(\rho H)\leq  E}
\frac{1}{2}\exp\left[ n\ln Q\left({ q_1}{\cal N}_{\gamma_1}(\rho),{ q_2}{\cal N}_{\gamma_2}(\rho)\right)\right].
\end{align}
However, the calculation of $Q\left({ q_1}{\cal N}_{\gamma_1}(\rho),{ q_2}{\cal N}_{\gamma_2}(\rho)\right)$ is challenging. Moreover, it is not known 
if \eqref{re:93} results tighter than \eqref{perrbound}.

%%%%%%%%%%%%%%%%%%%%%%%%%%%%%%%%%%%%%%%%%%%%%%%%%%%%%%%%%%%%%%

\section{Conclusion}

We studied the possibility of discriminating between two bosonic dephasing quantum channels.
We demonstrated that unambiguous discrimination is not feasible. Subsequently, we considered discrimination with nonzero error probability and analytically found the minimum over all possible input states, providing an alternative derivation with respect to Ref.\cite{PRXQuantum}.
{ Our proofs seem more straightforward than that in Ref.\cite{PRXQuantum}. In fact, for the optimality
part, they simply rely on basic properties of the trace norm, while Ref.\cite{PRXQuantum}
resorts to the possibility of simulating the channel by means of adjoining a parametrized
environment state following by the action of an un-parametrized channel. Additionally,
for the achievability part, given that the action of a bosonic dephasing
quantum channel can be described in terms of Toeplitz matrix (see Eq.\eqref{re:14}), it
seems quite natural to exploit properties of these matrices. Thus, we used the well
known Theorem \ref{thm:sezgo} and its extension to multilevel Toeplitz matrices (Theorem \ref{thm9}
in Appendix \ref{app:multilevel}), instead of studying the convergence of a sequence of probability
distributions obtained after applying at the output of the channel suitable quantum
and classical post processes. This as well gives insights on the optimal input state
(see the form of Eq.\eqref{re:35} that is analogous to the case discussed in Sec.VI A of
Ref.\cite{PRXQuantum}.)
}

Then, we went beyond the unconstrained input states and we derived an upper bound on the error probability under input energy constraint.
These results are extended from single-shot to multi-shot discrimination, considering the asymptotic behavior. 
The problem of establishing an order relation between the derived bound and the Chernoff bound remains open.
Additionally, the analysis performed for multi-shot discrimination considered non-local measurements. Hence, it would desirable to also determine the achievable performance with local measurements, eventually supplied with an adaptive strategy.    

{Finally, entanglement is often a useful resource for enhancing the channel discrimination capabilities \cite{PECD}. This seems not the case for dephasing quantum channels where, in absence of input constraints, the inutility of side entanglement is shown in Appendix \ref{app:side},
paralleling the result found in Ref.\cite{DisDeph} for finite dimensional dephasing quantum channels. However, whether or not entanglement assistance can provide any advantage in the presence of energy constraint is left for future investigations.}

%%%%%%%%%%%%%%%%%%%%%%%%%%%%%%%%%%%%%%%%%%%%%%%%%%%%%%%%%%%%%%%%%%%%%%%%%%%%%%%%%%%%%%%%%%%%%%%%%%%%%%%%%%%%%%%%%%%%%%%%%%%%

\subsection*{Acknowledgments}
S.Kh.O. is grateful to Carlo Garoni from Department of Mathematics, University of Rome Tor Vergata, Italy, 
for useful discussions about multilevel Toeplitz matrices.
L.M. acknowledges financial support from the Iran National Science Foundation (INSF) under Project No. 4022322.
M.R. and S.M. acknowledge financial support from ``PNRR MUR project PE0000023-NQSTI''.
{S.K.O. and S.M. are also grateful to M.M. Wilde for useful comments.}

%%%%%%%%%%%%%%%%%%%%%%%%%%%%%%%%%%%%%%%%%%%%%%%%%%%%%%%%%%%%%%

\bibliographystyle{IEEEtran}
\bibliography{bib}

\appendix
\section{Multilevel Toeplitz matrices}\label{app:multilevel}

In this Appendix, we recall basic features of multilevel Toeplitz matrices that are useful for our purposes
\cite{TZ98, S99, GS17}. We start noticing that for a given function $f:[-\pi, \pi]^n\to \mathbb{C}$, with $f\in L^1([-\pi, \pi]^n)$, the multidimensional Fourier series of $f$ is given by
\begin{equation}
  f(\theta_1, \ldots, \theta_n)=\sum_{k_1, \cdots, k_n\in\mathbb{Z}} f_{k_1, \cdots, k_n} e^{i(k_1\theta_1+\ldots+ k_n \theta_n)},
\end{equation}
where $f_{k_1, \ldots, k_n}$ are the Fourier coefficients of $f$, namely
\begin{equation}
  f_{k_1, \ldots, k_n}=\frac{1}{(2\pi)^n}\int_{[-\pi, \pi]^n} {\rm d}\theta_1\cdots {\rm d}\theta_n f(\theta_1, \cdots, \theta_n) e^{-i(k_1\theta_1+\cdots, k_n\theta_n)}.
\end{equation}
The symmetric multilevel Toeplitz matrix $T^{(n)}_{M_1\cdots M_n}(f)$ 
 of size $M_1M_2 \cdots M_n$, for every $M_1, M_2,  \ldots, M_n, n\in \mathbb{N}$, is defined by induction over $n$ in the following manner.

Using Relation \eqref{defTmatrix}, the $1$-level Toeplitz matrices of size $M_1$ is denoted as $T_{M_1}^{(1)}$. 
The symmetric $n$-level Toeplitz matrix of size $M_1\cdots M_n$ is defined based on $(n-1)$-level Toeplitz matrices as follows 
\begin{equation}
T^{(n)}_{M_1\cdots M_n}(f)=  \left(
    \begin{array}{cccc}
      T_0^{(n-1)} & T_{1}^{(n-1)} & \cdots & T_{M_{n-1}-1}^{(n-1)} \\
      T_{1}^{(n-1)} & \ddots & \ddots & \vdots \\
      \vdots & \ddots & \ddots & T^{(n-1)}_{1} \\
      T_{M_{n-1}-1}^{(n-1)} & \cdots & T_{1}^{(n-1)} & T_0^{(n-1)} \\
    \end{array}
  \right),
\end{equation}
where each block $T_{k}^{(n-1)}$, $k= 0, \cdots, M_{n-1}-1$ is itself a $(n-1)$-level Toeplitz matrix of size $M_1 M_2\cdots M_{n-1}$. In other words, we have \cite{TZ98}
\begin{equation}
     [T^{(n)}_{M_1M_2\cdots M_n}(f)]=\left[ \cdots \left[[f_{|i_1-j_1|, |i_2-j_2|, |i_3-j_3|, \cdots, |i_n-j_n|}]_{i_1, j_1=0}^{M_1-1}\right]_{i_2, j_2=0}^{M_2-1}\cdots \right]_{i_n, j_n=1}^{M_n-1}.
\end{equation} 

To clarify the structure of multilevel Toeplitz matrices, let us explicitly show the $3$-level Toeplitz matrix of size $8$ generated by a function $f: [-\pi, \pi]^3\to \mathbb{C}$
\begin{equation}
  T_{2\times2\times 2}^{(3)}(f)=\left(
    \begin{array}{cc|cc||cc|cc}
      f_{0, 0, 0} & f_{0, 0, 1} & f_{0, 1, 0} & f_{0, 1, 1} & f_{1, 0, 0} & f_{1, 0, 1} & f_{1, 1, 0} & f_{1, 1, 1} \\
      f_{0, 0, 1} & f_{0, 0, 0}  &f_{0, 1, 1} & f_{0, 1, 0} &  f_{1, 0, 1} & f_{1, 0, 0}  &f_{1, 1, 1} & f_{1, 1, 0}\\
      \hline
         f_{0, 1, 0} & f_{0, 1, 1} & f_{0, 0, 0} & f_{0, 0, 1}&  f_{1, 1, 0} & f_{1, 1, 1} & f_{1, 0, 0} & f_{1, 0, 1} \\
      f_{0, 1, 1} & f_{0, 1, 0}&f_{0, 0, 1} & f_{0, 0, 0}& f_{1, 1, 1} & f_{1, 1, 0}&f_{1, 0, 1} & f_{1, 0, 0}\\
      \hline\hline
       f_{1, 0, 0} & f_{1, 0, 1} & f_{1, 1, 0} & f_{1, 1, 1} & f_{0, 0, 0} & f_{0, 0, 1} & f_{0, 1, 0} & f_{0, 1, 1} \\
      f_{1, 0, 1} & f_{1, 0, 0}  &f_{1, 1, 1} & f_{1, 1, 0} &  f_{0, 0, 1} & f_{0, 0, 0}  &f_{0, 1, 1} & f_{0, 1, 0}\\
      \hline
         f_{1, 1, 0} & f_{1, 1, 1} & f_{1, 0, 0} & f_{1, 0, 1}&  f_{0, 1, 0} & f_{0, 1, 1} & f_{0, 0, 0} & f_{0, 0, 1} \\
      f_{1, 1, 1} & f_{1, 1, 0}&f_{1, 0, 1} & f_{1, 0, 0}& f_{0, 1, 1} & f_{0, 1, 0}&f_{0, 0, 1} & f_{0, 0, 0}\\
    \end{array}
  \right).
\end{equation} 

\begin{lemma}\label{lemmultilevel}
Given two symmetric  Toeplitz matrices $T_M(t^{(1)})$ and $T_M(t^{(2)})$, $(T_M(t^{(1)}))^{\otimes n}-(T_M(t^{(2)}))^{\otimes n}$ is a $n$-level Toeplitz matrix of size $M^n$ which is generated by $t^{(1)}_{k_1}\ldots t^{(1)}_{k_{n}}-t^{(2)}_{k_1}\ldots t^{(2)}_{k_{n}}$, where $k_1, \ldots, k_n
\in \{0,1,\ldots, M-1\}$.
\end{lemma}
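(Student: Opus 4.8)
The plan is to prove the statement by moving from the inductive block definition to the explicit entry-wise characterization of a multilevel Toeplitz matrix given in the excerpt, and then invoking linearity of the space of such matrices. First I would record the entry formula for a Kronecker power: writing each row and column index of an $M^n\times M^n$ matrix in mixed-radix (base-$M$) form as multi-indices $(i_1,\ldots,i_n)$ and $(j_1,\ldots,j_n)$ with $i_\ell,j_\ell\in\{0,\ldots,M-1\}$, one has
\begin{equation}
\left[(T_M(t))^{\otimes n}\right]_{(i_1,\ldots,i_n),(j_1,\ldots,j_n)}=\prod_{\ell=1}^{n}[T_M(t)]_{i_\ell j_\ell}=\prod_{\ell=1}^{n}t_{|i_\ell-j_\ell|},
\end{equation}
where the last equality uses that $T_M(t)$ is a symmetric Toeplitz matrix, so $[T_M(t)]_{i_\ell j_\ell}=t_{|i_\ell-j_\ell|}$. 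Since this entry depends on the two multi-indices only through the componentwise differences $|i_\ell-j_\ell|$, it matches exactly the explicit formula for $T^{(n)}_{M\cdots M}(f)$ given in the excerpt, with generating coefficients $f_{k_1,\ldots,k_n}=t_{k_1}\cdots t_{k_n}$. Hence $(T_M(t))^{\otimes n}$ is an $n$-level Toeplitz matrix of size $M^n$ generated by the product $t_{k_1}\cdots t_{k_n}$.

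Applying this separately to $t^{(1)}$ and $t^{(2)}$, I would then use that the map sending a generating array $(f_{k_1,\ldots,k_n})$ to the matrix $T^{(n)}_{M\cdots M}(f)$ is linear: each matrix entry is a single coefficient selected by the multi-index differences, so the difference of two $n$-level Toeplitz matrices acts coefficientwise and is again $n$-level Toeplitz. Therefore
\begin{equation}
(T_M(t^{(1)}))^{\otimes n}-(T_M(t^{(2)}))^{\otimes n}=T^{(n)}_{M\cdots M}(f^{(1)})-T^{(n)}_{M\cdots M}(f^{(2)})=T^{(n)}_{M\cdots M}(f^{(1)}-f^{(2)}),
\end{equation}
with $f^{(1)}_{k_1,\ldots,k_n}-f^{(2)}_{k_1,\ldots,k_n}=t^{(1)}_{k_1}\cdots t^{(1)}_{k_n}-t^{(2)}_{k_1}\cdots t^{(2)}_{k_n}$, which is precisely the claimed generator.

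Alternatively, one could establish the first step by induction on $n$ directly from the recursive block definition, checking that $T_M(t)\otimes (T_M(t))^{\otimes(n-1)}$ reproduces the block pattern of $T^{(n)}$ with $(n-1)$-level blocks indexed by $|i_n-j_n|$; but the entry-wise route avoids unwinding the recursion and is the cleaner of the two.

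I expect the only real subtlety, rather than a genuine obstacle, to be the bookkeeping of index conventions: one must confirm that the mixed-radix ordering used to flatten the Kronecker-product indices agrees with the ordering of levels in the paper's definition of $T^{(n)}$ (the standard Kronecker convention makes the first factor the most significant digit, whereas the recursive block definition places the last level outermost). Because all tensor factors here are identical and the generating product $\prod_{\ell}t_{k_\ell}$ is symmetric under permuting the levels, any discrepancy about which factor is declared the outer level is immaterial to the final generator, so this point can be dispatched with a short remark rather than a careful permutation argument.
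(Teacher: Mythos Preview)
Your proof is correct and essentially parallel to the paper's, though packaged differently. The paper expands each symmetric Toeplitz matrix in the basis $J_M^{(k)}$ (with $(J_M^{(k)})_{ij}=\delta_{|i-j|,k}$), distributes the tensor power to obtain
\[
\sum_{k_1,\ldots,k_n}\bigl(t^{(1)}_{k_1}\cdots t^{(1)}_{k_n}-t^{(2)}_{k_1}\cdots t^{(2)}_{k_n}\bigr)\,J_M^{(k_1)}\otimes\cdots\otimes J_M^{(k_n)},
\]
and then invokes an external result (Lemma~3.3 of the Garoni--Serra-Capizzano reference) to identify this expansion as a multilevel Toeplitz matrix with the claimed generator. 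You instead work directly at the entry level via the standard Kronecker-product formula and match it against the explicit entry characterization of $T^{(n)}$ already recorded in the appendix, then subtract by linearity. The two routes are equivalent---your entry-wise computation is precisely what the cited lemma would deliver---but yours is more self-contained, avoiding the external citation. Your remark about the index-ordering convention is apt and correctly disposed of by the permutation symmetry of the product generator.
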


\begin{proof}
 We define $J_M^{(k)}$ of size  $M$ as a matrix with entries
\begin{equation}
  \left(  J_M^{(k)} \right)_{ij}=\delta_{|i-j|-k},\quad i, j=1, \cdots, M, \quad k=0, 1, \cdots, n-1,
\end{equation}
where $\delta_a:=1$ if $a=0$ and $\delta_a:=0$ otherwise. 
We can expand $T_{M}(t)$ on the $ J_M^{(k)}$ elements as follows
\begin{equation}
  T_{M}(t^{(j)})=\sum_{k=0}^{M-1} t^{(j)}_k  J_M^{(k)},\qquad k=0, \ldots, M-1,\quad j=1, 2,
\end{equation}
where $t^{(j)}_k$ are the Fourier coefficients of the function $t^{(j)}$. Now, we have
\begin{align}
  \left(T_{M}(t^{(1)})\right)^{\otimes n}-\left(T_{M}(t^{(2)})\right)^{\otimes n} &= \left( \sum_{k=0}^{M-1} t^{(1)}_k  J_M^{(k)}   \right)^{\otimes n}-\left( \sum_{k=0}^{M-1} t^{(2)}_k  J_M^{(k)}   \right)^{\otimes n}\\
   &=\sum_{k_1, \ldots, k_{n}=0}^{M-1} \left(t^{(1)}_{k_1}\cdots t^{(1)}_{k_{n}}-t^{(2)}_{k_1}\cdots t^{(2)}_{k_{n}}\right) J_{M}^{(k_1)}\otimes J_{M}^{(k_2)}\otimes \cdots \otimes J_{M}^{(k_{n})}
\end{align}
Using  \cite[Lemma 3.3]{GS17}, the generating function of $n$-level Toeplitz matrix $ \left(T_{M}(t^{(1)})\right)^{\otimes n}-\left(T_{M}(t^{(2)})\right)^{\otimes n}$ is given by Fourier coefficients $f_{k_1, \cdots, k_n}=t^{(1)}_{k_1}\cdots t^{(1)}_{k_{n}}-t^{(2)}_{k_1}\cdots t^{(2)}_{k_{n}}$.
\end{proof}
Finally, we state the extension of Theorem \ref{thm:sezgo} to multilevel Toeplitz matrices \cite{S99, GS17, TZ98}.
\begin{theorem}\label{thm9}
 Let $f:[-\pi, \pi]^{n}\to \mathbb{C}$ be an absolutely integrable
	function with $\inf_{x\in [-\pi, \pi]^{n} }f(x)$,  $\sup_{x\in [-\pi, \pi]^{n} }f(x)$  finite numbers.  If $T^{(n)}_{M^n}(f)$ is a $n$-level Toeplitz matrix of size $ M^n$ and $\lambda_j(T_{M^n}^{(n)}(f))$ denotes the $j^{th}$ eigenvalue of $T_{M^n}^{(n)}(f)$, then 
	\begin{equation}
		\lim_{M\to\infty} \frac{1}{M^n}\sum_{j} F(\lambda_j(T_{M^n}^{(n)}(f)))=\frac{1}{(2\pi)^n}\int_{[-\pi, \pi]^n} {\rm d}\theta_1\cdots {\rm d}\theta_n F(f(\theta_1, \cdots, \theta_n)),
	\end{equation}
	where $F:[\inf_{x\in [-\pi, \pi]^{n} }f(x), \sup_{x\in [-\pi, \pi]^{n} }f(x)]\to  \mathbb{R}$ is any continuous function.
\end{theorem}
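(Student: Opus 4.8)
The plan is to reduce the statement to the convergence of normalized traces of powers and then to establish that convergence by comparing $T^{(n)}_{M^n}(f)$ with a multilevel circulant matrix whose spectrum is explicit. First I would record a spectral localization: since $f$ is bounded, the symmetric matrix $T^{(n)}_{M^n}(f)$ is Hermitian and every eigenvalue lies in $[\inf f,\sup f]$. Indeed, for a unit vector $\mathbf{x}$ the Rayleigh quotient $\mathbf{x}^\dagger T^{(n)}_{M^n}(f)\mathbf{x}$ equals the integral of $f$ against the nonnegative kernel $|\sum_{\mathbf{k}} x_{\mathbf{k}} e^{i\mathbf{k}\cdot\boldsymbol\theta}|^2/(2\pi)^n$, which has unit mass, so it is an average of $f$. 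This guarantees that $F$, defined only on $[\inf f,\sup f]$, is always evaluated at admissible arguments and stays uniformly bounded along the whole sequence.

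Second, since $F$ is continuous on the compact interval $[\inf f,\sup f]$, by Weierstrass it is a uniform limit of polynomials; combined with the localization above this reduces the theorem to the monomial case $F(x)=x^p$, i.e. to proving $\lim_{M\to\infty} \frac{1}{M^n}{\rm Tr}\big[(T^{(n)}_{M^n}(f))^p\big]=\frac{1}{(2\pi)^n}\int_{[-\pi,\pi]^n} f^p$ for every $p\in\mathbb{N}$. Writing this trace as a sum over closed multi-index paths and inserting the Fourier representation $f_{\mathbf{k}}=\frac{1}{(2\pi)^n}\int f(\boldsymbol\theta)e^{-i\mathbf{k}\cdot\boldsymbol\theta}\,d\boldsymbol\theta$ of the generating coefficients, each summation over a component index in $\{0,\dots,M-1\}$ produces a Dirichlet kernel; after dividing by $M^n$ only the diagonal contribution $\boldsymbol\theta^{(1)}=\dots=\boldsymbol\theta^{(p)}$ survives in the limit, which yields the right-hand side.

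Concretely, the route I would actually carry out is to introduce the $n$-level circulant $C^{(n)}_{M^n}(f)$ obtained by periodizing the generating coefficients. It is diagonalized by the $n$-dimensional DFT, with eigenvalues equal to $f$ sampled on the uniform grid $\boldsymbol\theta_{\mathbf{k}}=2\pi\mathbf{k}/M$, so that $\frac{1}{M^n}\sum_{\mathbf{k}} F(f(\boldsymbol\theta_{\mathbf{k}}))$ is exactly a Riemann sum for $\frac{1}{(2\pi)^n}\int F(f)$. It then suffices to prove that $\{T^{(n)}_{M^n}(f)\}$ and $\{C^{(n)}_{M^n}(f)\}$ are \emph{asymptotically equivalent}, namely $\frac{1}{M^n}\|T^{(n)}_{M^n}(f)-C^{(n)}_{M^n}(f)\|_F^2\to 0$, whence the two limiting spectral distributions coincide by the standard perturbation lemma for normalized eigenvalue counting functions. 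The base level $n=1$ is precisely Theorem \ref{thm:sezgo}, and the block structure of $T^{(n)}_{M^n}$ (each block an $(n-1)$-level matrix) suggests organizing the Frobenius estimate by induction on the number of levels.

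The hard part is this last step when $f$ is merely bounded rather than continuous. The Frobenius comparison is transparent if $f$ is a trigonometric polynomial, since then both matrices are banded and differ only in boundary blocks, giving an $O(M^{n-1})$ discrepancy that is negligible after normalization by $M^n$; but for general bounded $f$ one must first approximate $f$ in $L^1([-\pi,\pi]^n)$ by trigonometric polynomials and then interchange that approximation limit with the limit $M\to\infty$. Controlling this interchange uniformly is exactly what the theory of approximating classes of sequences in the multilevel GLT framework \cite{GS17, S99, TZ98} is designed for, and I would invoke it to close the argument, since the naive Riemann-sum step is not literally valid when $f$ lacks continuity.
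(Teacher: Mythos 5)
The paper does not actually prove Theorem \ref{thm9}: it is stated as a known extension of Theorem \ref{thm:sezgo} and attributed to \cite{S99, GS17, TZ98}, so there is no in-paper argument to compare yours against. Your sketch is, in substance, the standard proof from exactly that literature: spectral localization via the Rayleigh-quotient identity (correct, and needed so that $F$ is only ever evaluated on $[\inf f,\sup f]$), Weierstrass reduction to moments, and the circulant comparison with a Gray-type asymptotic-equivalence lemma, organized by induction over the levels. The outline is sound, and you correctly identify the genuinely hard step: for trigonometric-polynomial symbols the Frobenius discrepancy between $T^{(n)}_{M^n}(f)$ and the multilevel circulant lives on $O(M^{n-1})$ entries and is negligible after normalization, but for merely bounded measurable $f$ the Riemann-sum identification of the circulant spectrum is not literally valid, and one must interchange the $L^2$ approximation of the symbol (controlled by $\Vert T^{(n)}_{M^n}(g)\Vert_F^2\leq M^n\Vert g\Vert^2_{L^2}/(2\pi)^n$) with the limit $M\to\infty$. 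Since you close that step by invoking the approximating-classes-of-sequences/GLT machinery of \cite{GS17, TZ98}, your proposal is ultimately a correct roadmap that, like the paper itself, outsources the decisive technical lemma to the same references; as a self-contained proof it is therefore incomplete, but it is faithful to how the result is actually established in the cited sources, and every step you do spell out is correct.
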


%%%%%%%%%%%%%%%%%%%%%%%%%%%%%%%%%%%%%%%%%%%%%%%%%%%%%%%%%%%%%%%%%%%%%%

\section{Discrimination with side entanglement}\label{app:side}

 In case we are allowed to use a two-mode entangled state to probe the channel (acting e.g. only on the second mode), the optimization problem \eqref{re:31} becomes $\max_{\rho\in \mathcal{P}(\cH\otimes \cH)}  \| \left({\rm id}\otimes \Delta\right)(\rho)\|_{1}$.
 
\begin{theorem}\label{thm:side}
Without input energy constrain, it is
  \begin{equation}
\max_{\rho\in \mathcal{P}(\cH\otimes \cH)}  \| \left({\rm id}\otimes \Delta\right)(\rho)\|_{1}
\leq 
\max_{\rho\in \mathcal{P}(\cH)}  \|\Delta(\rho)\|_{1}.
\end{equation}
\end{theorem}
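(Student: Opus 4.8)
The plan is to reduce the bipartite problem to the single-mode upper bound already established in Theorem \ref{thm:theorem2}, exploiting the fact that $\Delta$ is a signed mixture of phase rotations that act trivially on any ancilla. First I would recall the integral representation underlying Eq.\eqref{cn}: writing $U(\theta)$ for the diagonal unitary in the Fock basis with entries $e^{ik\theta}$ and setting $g(\theta):=q_1 p_{\gamma_1}(\theta)-q_2 p_{\gamma_2}(\theta)$, one has
\begin{equation}
\Delta(\rho)=\int_{-\pi}^{\pi}{\rm d}\theta\, g(\theta)\, U(\theta)^{\dagger}\,\rho\, U(\theta),\qquad \rho\in\mathcal{T}(\cH),
\end{equation}
exactly as used in the upper-bound half of Theorem \ref{thm:theorem2}. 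This expresses $\Delta$ as a weighted integral of unitary conjugations, which is the only structural input the argument needs.

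Next I would apply the channel difference on the second factor only. For any $\rho\in\mathcal{P}(\cH\otimes\cH)$, linearity together with the continuity of the trace (the same fact, noted earlier, that lets one exchange the $\theta$-integral with trace-class operations) gives
\begin{equation}
\bigl({\rm id}\otimes\Delta\bigr)(\rho)=\int_{-\pi}^{\pi}{\rm d}\theta\, g(\theta)\,\bigl(I\otimes U(\theta)^{\dagger}\bigr)\,\rho\,\bigl(I\otimes U(\theta)\bigr).
\end{equation}
The crucial observation is that $I\otimes U(\theta)$ is again a unitary on $\cH\otimes\cH$ for every $\theta$, irrespective of the ancilla. I would then move the trace norm inside the integral by the triangle inequality and use its invariance under unitary conjugation, obtaining
\begin{equation}
\bigl\|\bigl({\rm id}\otimes\Delta\bigr)(\rho)\bigr\|_{1}\leq \int_{-\pi}^{\pi}{\rm d}\theta\,|g(\theta)|\,\bigl\|\bigl(I\otimes U(\theta)^{\dagger}\bigr)\rho\bigl(I\otimes U(\theta)\bigr)\bigr\|_{1}=\int_{-\pi}^{\pi}{\rm d}\theta\,|g(\theta)|,
\end{equation}
since $\rho$ has unit trace norm. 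Taking the supremum over all bipartite pure states and invoking Theorem \ref{thm:theorem2}, which identifies $\int_{-\pi}^{\pi}{\rm d}\theta\,|g(\theta)|$ with $\max_{\rho\in\mathcal{P}(\cH)}\|\Delta(\rho)\|_{1}$, yields the claimed inequality.

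The argument involves no genuine obstacle beyond the one already handled in Theorem \ref{thm:theorem2}; the single point demanding care is the interchange of the $\theta$-integral with the trace norm in the infinite-dimensional, tensored setting, which is licensed by the absolute integrability of $g$ and the boundedness of the unitaries $I\otimes U(\theta)$. Conceptually, the content is simply that extending $\Delta$ by the identity cannot enlarge the reachable value, because the defining unitaries $U(\theta)$ act only on the probed mode and leave the reference system untouched; hence side entanglement confers no advantage in the unconstrained case, paralleling the finite-dimensional result of Ref.\cite{DisDeph}.
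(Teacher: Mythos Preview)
Your proof is correct and follows essentially the same route as the paper: write $({\rm id}\otimes\Delta)(\rho)$ as an integral over $\theta$ of unitary conjugates $(I\otimes U(\theta)^\dagger)\rho(I\otimes U(\theta))$ weighted by $q_1 p_{\gamma_1}(\theta)-q_2 p_{\gamma_2}(\theta)$, apply the triangle inequality and unitary invariance of the trace norm, and then invoke Theorem~\ref{thm:theorem2} to identify the resulting integral with $\max_{\rho\in\mathcal{P}(\cH)}\|\Delta(\rho)\|_1$.
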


\begin{proof}
Following the first steps of the proof of Theorem \ref{thm:theorem2} we can get
\begin{align}
\max_{\rho\in \mathcal{P}(\cH\otimes \cH)}  \| \left({\rm id}\otimes \Delta\right)(\rho)\|_{1}& =  \left\Vert   \int_{-\pi}^{\pi} {\rm d}\theta \left({ q_1 } p_{\gamma_1}(\theta)-{ q_2}p_{\gamma_2}(\theta)\right) \sum_{m_1,n_1, m_2, n_2=0}^\infty
e^{i\theta(n_2-m_2)}  \rho_{m_1,n_1, m_2, n_2}\vert m_1\rangle\langle  n_1\vert\otimes \vert m_2\rangle\langle  n_2\vert \right\Vert_1.
\end{align}
Using $U(\theta)$ defined as a diagonal matrix in the Fock basis with entries  $e^{ik\theta}$, $k=0,1,2,\cdots$, we have
\begin{align}
\max_{\rho\in \mathcal{P}(\cH\otimes \cH)}  \| \left({\rm id}\otimes \Delta\right)(\rho)\|_{1} &= \left\Vert  \int_{-\pi}^{\pi} {\rm d}\theta \left({ q_1}p_{\gamma_1}(\theta)-{ q_2}p_{\gamma_2}(\theta)\right) (I \otimes U(\theta)^\dagger)\rho (I \otimes U(\theta)) \right\Vert_1\\
&\leq \int_{-\pi}^{\pi} {\rm d}\theta \left\vert\left({ q_1}p_{\gamma_1}(\theta)-{ q_2}p_{\gamma_2}(\theta)\right)\right\vert \left\Vert (I \otimes U(\theta)^\dagger)\rho 
(I \otimes U(\theta))\right\Vert_1\\
&=   \int_{-\pi}^{\pi} {\rm d}\theta \left\vert { q_1}p_{\gamma_1}(\theta)-{ q_2}p_{\gamma_2}(\theta)\right\vert\\
&=\max_{\rho\in \mathcal{P}(\cH)}  \|\Delta(\rho)\|_{1},
\end{align}
where in the last equality we used the result of Theorem \ref{thm:theorem2}.
\end{proof}
Since by definition it is
\begin{equation}
\max_{\rho\in \mathcal{P}(\cH)}  \|\Delta(\rho)\|_{1}\leq  \max_{\rho\in \mathcal{P}(\cH\otimes \cH)}  \| \left({\rm id}\otimes \Delta\right)(\rho)\|_{1},
\end{equation}
together with Theorem \ref{thm:side} we can conclude that 
$\max_{\rho\in \mathcal{P}(\cH)}  \|\Delta(\rho)\|_{1} =  \max_{\rho\in \mathcal{P}(\cH\otimes \cH)}  \| \left({\rm id}\otimes \Delta\right)(\rho)\|_{1}$,
i.e. that side entanglement is not useful in discrimination between two bosonic dephasing quantum channels.

%%%%%%%%%%%%%%%%%%%%%%%%%%%%%%%%%%%%%%%%%%%%%%%%%%%%%%%%%%%%%%%%%%%%%%%%%%%

\end{document}